\documentclass[a4paper,UKenglish,cleveref, autoref, thm-restate]{lipics-v2021}
\hideLIPIcs
\usepackage{amsmath}
\usepackage{amsthm}
\usepackage{hyperref}
\usepackage{xcolor}
\usepackage{graphicx}
\usepackage{enumitem}
\usepackage{booktabs}
\usepackage[export]{adjustbox}

\usepackage{tikz}
\usetikzlibrary{backgrounds,scopes} 
\usetikzlibrary{arrows.meta} 

\def\problembox#1{\vspace{2mm}\noindent\fbox{\begin{minipage}{.985\textwidth}#1
    \end{minipage}}\vspace{2mm}}

\newcommand{\STree}{\mathsf{ST}}
\newcommand{\numocc}{\#\mathit{occ}}
\newcommand{\MUS}{\mathsf{MUS}}
\newcommand{\LRS}{\mathsf{LRS}}
\newcommand{\LSuf}[1]{\ensuremath{\LRS{}[#1]}}
\newcommand{\SSuf}[1]{\ensuremath{\mathsf{SDS}[#1]}}
\newcommand{\myProof}{\textbf{Proof.~}}
\newcommand{\LPF}{\mathsf{LPF}}

\newcommand{\fnDepth}{\ensuremath{\mathsf{depth}}}
\newcommand*{\pali}[1]{\overleftarrow{#1}} \newcommand*{\Weiner}{\mathsf{W}}
\newcommand*{\strlabel}{\mathsf{label}}
\newcommand*{\timeSU}{\ensuremath{t_{\textup{SU}}}}
\newcommand*{\timeTraverse}{t_{\textup{trav}}}
\DeclareMathOperator{\polyloglog}{polyloglog}
\DeclareMathOperator{\polylog}{polylog}

\author{Dominik {Köppl}}{University of Yamanashi, Kofu, Japan \and \url{https://dkppl.de}}{dkppl@dkppl.de}{0000-0002-8721-4444}{This work was supported in part by JSPS KAKENHI Grant Numbers 25K21150, 23H04378.}
\author{Gregory Kucherov}{LIGM, CNRS and Gustave Eiffel University, Marne-la-Vall\'ee, France \and \url{http://igm.univ-mlv.fr/~koutcher/}}{gregory.kucherov@univ-eiffel.fr}{0000-0001-5899-5424}{}

\authorrunning{Köppl and Kucherov}
\Copyright{D. Köppl and G. Kucherov}

\title{Near-real-time Solutions for Online String Problems}
\titlerunning{Near-real-time Solutions for Online String Problems}

\ccsdesc{Information systems~Information retrieval}
\keywords{online algorithms, string algorithms, suffix tree, real-time computation, Lempel--Ziv factorization, minimal unique substrings}

\begin{document}
\nolinenumbers 
	
	\maketitle
	\begin{abstract}
	  Based on the Breslauer--Italiano online suffix tree construction algorithm (2013) with double logarithmic worst-case guarantees on the update time per letter, 
	  we develop near-real-time algorithms for several classical problems on strings,
	  including the computation of the longest repeating suffix array, the (reversed) Lempel--Ziv 77 factorization, and the maintenance of minimal unique substrings, all in an online manner.
	  Our solutions improve over the best known running times for these problems in terms of the worst-case time per letter, for which we achieve a poly-\emph{log-logarithmic} time complexity, within a linear space.
	  Best known results for these problems require a poly-\emph{logarithmic} time complexity per letter or only provide amortized complexity bounds.
As a result of independent interest, we give conversions between the longest previous factor array and the longest repeating suffix array in space and time bounds based on their irreducible representations, which can have sizes sublinear in the length of the input string.
	\end{abstract}

	\textbf{Keywords:}
	String algorithms, online algorithms, suffix tree, real-time computation, Lempel--Ziv factorization, minimal unique substrings

\setcounter{page}{1}

	\section{Introduction}
	Strings are a fundamental data type in computer science, and many classical problems on strings have been studied extensively in the literature.
	When processing a large string, it is often desirable to obtain results as soon as possible rather than waiting until the entire string has been read.
	For that, we consider algorithms that work in an \emph{online} manner. These process the string letter by letter from left to right, while maintaining information about the string read so far.
	Unfortunately, most online algorithms for strings provide only \emph{amortized} time guarantees per letter, i.e., the average time spent to process each letter is bounded by a function of the input size, but some read letters may take much longer to process.
	This can be problematic in real-time applications, where it is important to have predictable processing times for each letter.
	In this paper, we focus on online algorithms with \emph{worst-case} time guarantees per letter.
	While literature gives real-time algorithms for LZ78-like factorizations ~\cite{ziv78lz} with a static dictionary~\cite{hartman85optimal} for constant alphabets,
	palindrome recognition and pattern matching~\cite{galil76real}, 
	for many classical problems on strings, online algorithms with efficient worst-case time guarantees per letter are not known.

	In fact, a majority of string problems can be only solved efficiently if we maintain a full-text indexing data structure of the string read so far.
	One of the most fundamental such data structures is the suffix tree~\cite{weiner73linear}.
	The first online construction of the suffix tree is due to Ukkonen~\cite{ukkonen95online}, which builds a suffix tree online in time $O(n)$ for a constant-sized alphabet. 
	However Ukkonen's algorithm does not provide any worst-case time guarantee on processing an individual letter, other than a trivial $O(n)$ bound. 
	For certain inputs, this bound is actually tight:
Given a string $SSc$ of length $n$, Ukkonen's algorithm maintains an active node of string depth at least $|S|$ after having parsed the prefix $SS$, but then needs to add branches for all suffixes of $S$ in case $c$ is a new letter, 
and thus creates $\Theta(n)$ nodes just for updating the suffix tree for the last letter. (Spending $\Theta(n)$ for the last letter may be a common phenomenon in practice since we usually append a unique delimiter letter $\$$ at the end of the input string.)
To achieve worst-case time guarantees, a line of research is based on the idea to maintain a suffix tree (or similar data structure) for the \emph{inverted} input string.
The rationale for this is that appending a new letter to the input string amounts to introducing only one new suffix in the inverted string, which causes less updates to the data structure compared to Ukkonen's algorithm.
The starting point of this line of research is Weiner's algorithm~\cite{weiner73linear}, historically the first linear-time algorithm for suffix tree construction for constant-sized alphabets. 
Weiner's algorithm processes the string right-to-left by introducing a new suffix at each step, and is therefore suitable for this task.
In its original version, Weiner's algorithm does not provide a worst-case time guarantee on processing a single letter. 
However, in the following Section~\ref{sec:weiner}, we summarize modifications that accomplish this.
These modifications result in double logarithmic worst-case times per letter, a time complexity referred to as \emph{near-real-time}~\cite{breslauer13near}.

\subparagraph*{Our contributions}
Having a near-real-time suffix tree construction algorithm as a subroutine, we can solve various classical problems on strings in near-real-time online manner by relying on the maintained suffix tree.
Figure~\ref{fig:overview} gives an overview of the problems and applications studied in this paper, along with their dependencies.
We start with the computation of the \emph{longest repeating suffix array} (LRS) in Section~\ref{sec:lrs} and
draw a connection to the \emph{longest previous factor array} (LPF) in Section~\ref{sec:lpf},
which serves as the base for computing the Lempel--Ziv 77 (LZ77)~\cite{ziv77lz} factorization in Section~\ref{sec:lz77}.
In Section~\ref{sec:mus}, we maintain the set of \emph{minimal unique substrings} (MUS) of the string read so far.
Finally, in Section~\ref{sec:reversedlz}, we maintain two variants of the LZ77 factorization on the reversed string.
Table~\ref{tab:contribution} summarizes our results along with the best known prior results.
While some of these obtain $O(\polylog(n))$ time per letter (although with additional constraints on the space), 
we obtain the first online algorithms with worst-case $O(\polyloglog(n))$ time per letter for all the studied problems.

\begin{figure}[t]
	\centering
	  \resizebox{\linewidth}{!}{\begin{tikzpicture}[node distance=2cm,
			>=Stealth,
			every node/.style={draw, fill=white, rounded corners, align=center, minimum width=2.2cm, minimum height=1cm}
			]
			
\node (A1) at (4,0) {\textsc{SuffixUpdate}\\Section~\ref{sec:weiner}};
			\node [left of=A1,anchor=east] (A2) {\textsc{InsertionPoint}\\Section~\ref{sec:weiner}};
			\node [left of=A2,anchor=east] (A4) {\textsc{ShortestDoubleSuffix}\\Section~\ref{sec:mus}};
			\node [left of=A4,anchor=east,xshift=-3em] (A3) {\textsc{LongestRepeatingPrefix}\\Section~\ref{sec:lrs}};
			
\node [below of=A3,anchor=east] (B1) {$\LRS$ Array\\Section~\ref{sec:lrs}};
			\node [right of=B1,anchor=west] (B1A) {$\LPF$ Array\\Section~\ref{sec:lpf}};
			\node [right of=B1A,anchor=west] (B2) {LZ77\\Section~\ref{sec:lz77}};
			\node [right of=B2,anchor=west] (B3) {MUS\\Section~\ref{sec:mus}};
			\node [right of=B3,anchor=west,xshift=-2em] (B4) {rev.\ LZ, overlap\\Section~\ref{sec:reversedlz}};
			\node [right of=B4,anchor=west,xshift=0em] (B5) {rev.\ LZ, nonoverlap\\Section~\ref{sec:reversedlz}};
			
			\begin{scope}[on background layer]
\draw[->] (A2) -- (A1);
				\draw[->] (A3) -- (B1);
				\draw[->] (B1) to[bend right=15] (B2);
				\draw[->] (B1) -- (B1A);
				\draw[->] (A3) -- (B3);
				\draw[->] (A2) -- (A4);
				\draw[->] (A2) to[bend right=15] (A3);
				\draw[->] (A4) -- (B3);
				\draw[->] (A1) -- (B4);
				\draw[->] (A1) -- (B5);
			\end{scope}
			
		\end{tikzpicture}
	}
	\caption{Problems (above) and applications (below) studied in this paper, with their dependencies visualized by arrows. \textsc{SuffixUpdate} is the fundamental problem on which all other problems and applications rely, for which arrows are omitted.
	}
	\label{fig:overview}
\end{figure}
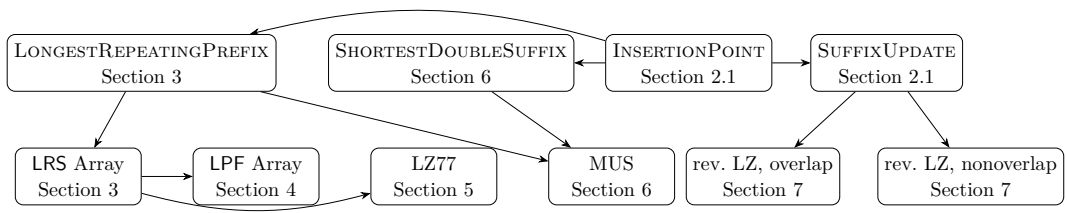

\begin{table}[h]
	\centering
	\caption{Worst-case time complexities per letter for online algorithms computing various string problems on a string of length~$n$.
		The shown results are the best known prior to this work, where we need $O(\timeSU)$ time for each problem. 
		The complexity $\timeSU$ is the time for solving \textsc{SuffixUpdate}, see \cref{tab:timeSU} for possible bounds, which can be in $O(\polyloglog n)$.
	}
	\label{tab:contribution}
	\begin{tabular}{lll}
		\toprule
		problem   & known & $O(\timeSU)$-time solution\\
		\midrule
		LRS array & $O(\lg^3 n)$~\cite{okanohara08online} & Section~\ref{sec:lrs} \\
		LZ77      & $O(\lg^3 n)$~\cite{okanohara08online} & Section~\ref{sec:lz77} \\
		MUS       & $O(\lg \sigma)$ amortized~\cite{mieno20minimal}          & Section~\ref{sec:mus} \\
		reverse LZ & $O(\lg^2 \sigma)$ amortized~\cite{sugimoto13reversedLZ} & Section~\ref{sec:reversedlz} \\
		overlapping reverse LZ & $O(\lg^2 \sigma)$ amortized~\cite{sugimoto13reversedLZ} & Section~\ref{sec:reversedlz} \\
		\bottomrule
	\end{tabular}
\end{table}

\section{Preliminaries}
\label{sec:prelim}
We assume the \emph{word RAM} with word size $w \geq \log n$ bits, where $n$ is the length of the input string.
Let $\Sigma$ denote an integer alphabet of size $\sigma = |\Sigma| = n^{O(1)}$.
An element of $\Sigma^*$ is called a \emph{string}.
Given a string $S \in \Sigma^*$, we denote its length with $|S|$,
its $i$-th letter with $S[i]$ for $i \in [1..|S|]$.
Further, we write $S[i..j] = S[i]\cdots S[j]$.
We write $\pali{S}$ to denote the \emph{inverted} string of $S$, i.e., $\pali{S} = S[|S|] S[|S|-1] \cdots S[1]$.
The \emph{suffix tree} $\STree(S)$ of a string $S$ is a compacted trie representing all suffixes of $S$.

In what follows, we fix a string $T[1..n]$ over $\Sigma$ as the input string to be processed online.
Here, online means that the algorithm reads $T$ letter by letter and maintains the result for the processed portion of the string.
All presented algorithms in this paper use $O(n)$ words of space, unless otherwise stated. Note that this means that the space is proportional to the size of the currently processed string (as opposed to the size of the entire string). 

\subsection{Breslauer--Italiano modification of Weiner's algorithm}\label{sec:weiner}

Weiner's algorithm computes the suffix tree $\STree(T)$ of a string $T[1..n]$ by processing it in backward direction and inserting suffixes from the shortest $T[n..]$ to the longest~$T[1..]$. 
The algorithm
augments suffix tree nodes with additional pointers called \textit{Weiner links} (W-links for short): 
a \emph{W-link} labeled with letter \(c\) from a node \(\alpha\) (denoted by \(\Weiner_c(\alpha)\)) points to the locus of string \(c \cdot \strlabel(\alpha)\), where \(\strlabel(\alpha)\) is the string spelled out by the path from the root to \(\alpha\). 
\(\Weiner_c(\alpha)\) is defined only if \(c\cdot \strlabel(\alpha)\) is a substring of the current string. 
If the locus of \(c\cdot \strlabel(\alpha)\) is a node, the W-link is called \emph{hard}, otherwise it is called  \emph{soft}. 
In the Breslauer--Italiano modification of Weiner's algorithm, a soft W-link is represented by a pointer to the closest descendant node of the locus of \(c \cdot \strlabel(\alpha)\). 

A round of Weiner's algorithm corresponds to the processing of a newly read letter.
At round $i$, the algorithm inserts the new suffix to the suffix tree, solving the following problem.

\problembox{\textsc{SuffixUpdate}\\
  \textbf{Input:} suffix tree $\STree(T[i+1..])$ and letter~$T[i]$.\\
  \textbf{Output:} $\STree(T[i..])$.
}

To this end, it locates the \textit{insertion point} where a new leaf should be attached. 
The insertion point is the locus of the longest prefix of $T[i..]$ already present as substring in $T[i+1..]$, it can be either an existing node or a locus on an existing edge. 
Computing the insertion point is the key operation of Weiner's algorithm, which we formulate as follows.

\problembox{\textsc{InsertionPoint}\\
  \textbf{Input:} suffix tree $\STree(T[i+1..])$ and letter~$T[i]$.\\
  \textbf{Output:} the locus of the longest prefix of $T[i..]$ appearing in $T[i+1..]$.
}

\begin{figure}[t]
  \centering
    \scalebox{.9}{
  \includegraphics[width=0.4\textwidth,page=1]{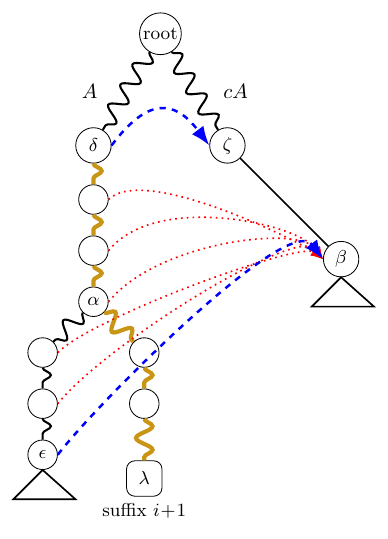}
  \hspace{2em}
  \includegraphics[width=0.4\textwidth,page=2]{img/suffixupdate.pdf}
}
  \caption{One round of Weiner's suffix tree construction algorithm: updating $\STree(T[i+1..])$ (left) to $\STree(T[i..])$ (right) by inserting the new suffix $T[i..]$.
    Dashed blue arrows represent hard W-links, dotted red arrows represent soft W-links, both for the letter~$c=T[i]$.
    The W-links on the path from $\alpha$ to $\epsilon$ in the right tree are not shown for the sake of clarity.
    Curly edges represent paths that may contain multiple nodes.
    Node $\alpha$ is  the closest ancestors of $\lambda$ having a W-link by $c$. This link can be soft (as in the figure) or hard (in case $\alpha=\delta$). 
If this link is soft, the algorithm creates a new node $\gamma$ which is an insertion point. 
If this link is hard, the insertion point is  $W_c(\alpha)$. 
    After creating $\gamma$, the W-links on the golden thick curly path from $\lambda$ to $\delta$ need to be updated (Steps~\ref{update1}--\ref{update2}).
  }
  \label{fig:weiner_suffix_update}
\end{figure}

Solving \textsc{InsertionPoint} or the whole update process is sometimes called the \textit{suffix tree oracle}~\cite{fischer15wexponential,takagi20fully,cole15trays}.

We summarize the steps implementing \textsc{SuffixUpdate} by Breslauer and Italiano's algorithm~\cite{breslauer13near}. 
The steps are also illustrated in Figure~\ref{fig:weiner_suffix_update}.
Let $c = T[i]$.

\begin{enumerate}
  \item Starting from the leaf $\lambda$ corresponding to suffix $T[i+1..]$,  find its lowest 
    ancestor $\alpha$ such that $\Weiner_{c}(\alpha)$ is defined. \label{it:marked_ancestor}
  \item Let $\beta=\Weiner_{c}(\alpha)$.  If $\Weiner_{c}(\alpha)$ is hard, then $\beta$ is the insertion point. 
    Otherwise, split the parent edge of $\beta$ and create a new node $\gamma$.  
    Copy all W-links from node $\beta$ to $\gamma$ which all become soft. \label{it:copyweiner}
  \item Create a new leaf $\lambda'$ for suffix $T[i..]$ as a child of the insertion point. 
\item Create a hard W-link $\Weiner_{c}(\lambda)=\lambda'$ and a soft W-link 
  $\Weiner_{c}(q)=\lambda'$ for each node $q$ on the path between $\lambda$ (excluded) up to $\alpha$ (excluded).  \label{update1}
  \item Create a hard W-link $\Weiner_{c}(\alpha)=\gamma$ and a soft W-link 
  $\Weiner_{c}(q)=\gamma$ for each node $q$ on the path between $\alpha$ (excluded) up to the first node $\delta$ for which $\Weiner_{c}(\delta)\neq \beta$ (excluded). \label{update2}
\end{enumerate}

Implemented naively, Step~\ref{it:marked_ancestor} can visit up to $O(n)$ nodes, Step~\ref{it:copyweiner} may have to copy $O(\sigma)$ W-links, and Steps~\ref{update1}--\ref{update2} can update the W-links of $O(n)$ nodes in the worst case.
However, assuming a constant-size alphabet, the total time complexity over all $n$ rounds can be shown to be $O(n)$ as follows.
Step~\ref{it:marked_ancestor} traverses $(\fnDepth(\lambda_{i+1}) - \fnDepth(\delta))$ nodes, where $\lambda_{i}$ stands for the leaf labeling the entire string $T[i..]$. 
By \cref{lem:weiner_depth} in Appendix~\ref{app:weiner}, we have $\fnDepth(\zeta)\leq\fnDepth(\delta)+1$, and then $\fnDepth(\lambda_{i+1}) - \fnDepth(\delta)\leq \fnDepth(\lambda_{i+1}) - \fnDepth(\zeta)+1 = \fnDepth(\lambda_{i+1}) - \fnDepth(\lambda_{i})+3$. Summing up over all rounds, Step~\ref{it:marked_ancestor} visits $O(n)$ nodes altogether. 
By a similar argument, Steps~\ref{update1}--\ref{update2} update up to $O(n)$ W-links as well.
Finally, Step~\ref{it:copyweiner} needs $O(n)$ time in total since there are at most $n$ hard W-links in the entire suffix tree and there are $O(1)$ W-links to copy at each round.

To achieve a nontrivial worst-case time bound per letter,
we need to efficiently implement Steps~\ref{it:marked_ancestor} and Steps~\ref{update1}--\ref{update2}.
Breslauer and Italiano \cite{breslauer13near} proposed to implement Step~\ref{it:marked_ancestor} by maintaining, for each alphabet letter $a$, the (dynamic) Euler tour list of the current suffix tree where nodes are \textit{marked} in color~$a$ according to $\Weiner_{a}$-links defined in them.
 Then, the ancestor node $\alpha$ is retrieved using two data structures: one supporting queries for the \textit{previous marked element} in a dynamic list,  applied to the Euler tour, and the other supporting \textit{dynamic lowest common ancestor} applied to the suffix tree itself.
 The first is solved using the data structure of Dietz and Raman~\cite{dietz91persistence} and the second using the data structure of Cole and Hariharan~\cite{cole05dynamic}.
 Altogether, this results in $O(\log\log n)$ time.
 We note in passing that for a constant-size alphabet, dynamic lowest common ancestor queries can be easily avoided, therefore only colored predecessor queries are critical.

To efficiently implement updates of Steps~\ref{update1}--\ref{update2}, the authors propose to distribute this work over multiple rounds of the algorithm (i.e. multiple letters of the input string) by updating only a constant number at each round. 
(For this reason, the algorithm is called \textit{quasi-real time}.) However, the nodes are updated in the order of increasing depth which guarantees no interference with the ongoing process of tree updating. 
As a result, Breslauer and Italiano modification spends $O(\log\log n)$ worst-case time on each letter. 

The central observation for this paper is that this modification, as well as the improvements in the following subsection,
maintain the suffix tree within this worst case-time per letter while keeping nearly complete functionality.
\begin{observation}
  The suffix tree updated by the Breslauer--Italiano algorithm
  is fully functional except that some W-links may be missing temporarily (due to the deferred updates in Steps~\ref{update1}--\ref{update2}).
\end{observation}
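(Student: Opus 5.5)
To justify the observation I would argue that the only component of the maintained structure ever left incomplete across rounds is the set of W-links touched by Steps~\ref{update1}--\ref{update2}. Everything else is brought fully up to date within the round, and the stale W-links never corrupt later rounds. Concretely, I would show by induction on the rounds that after round~$i$ three things hold. First, the tree topology (nodes, edges, edge labels as pointers into $T$, leaves) equals $\STree(T[i..])$. Second, every W-link that is present points to its correct target, or at worst to a descendant on the correct path, as in the soft-link convention. Third, the W-links that are missing are exactly those on the pending paths of Steps~\ref{update1}--\ref{update2} whose deferred processing has not yet reached them.

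For the topology, I would note that Steps~\ref{it:marked_ancestor}--3 are executed completely within round~$i$: locating $\alpha$, possibly splitting an edge to create $\gamma$, copying W-links from $\beta$ to $\gamma$, and attaching the new leaf $\lambda'$. Provided $\alpha$ is found correctly, the resulting tree is exactly $\STree(T[i..])$. Hence all functionality that depends only on the tree shape is available immediately: navigation, string depths, leaf labels, and the dynamic LCA and Euler-tour structures of Dietz--Raman and Cole--Hariharan, which are updated eagerly.

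The main obstacle is checking that a deferred, not-yet-installed W-link cannot make a later round compute a wrong $\alpha$ or $\beta$. Here I would use the order in which the deferred work is performed. The nodes on the paths from $\lambda$ to $\alpha$ and from $\alpha$ to $\delta$ are processed in order of increasing depth, so the nodes already updated always form a top segment of each path. Any later query in Step~\ref{it:marked_ancestor} climbs from a leaf to its lowest ancestor carrying the relevant W-link. I would argue, following Breslauer--Italiano, that the following two facts suffice. The previously reached ancestor still carries a valid (possibly soft) link. Moreover, the redirection of soft links from $\beta$ to $\gamma$, or to $\lambda'$, only refines a pointer to a closer descendant on the same path. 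So a not-yet-updated soft link still points to a node below the true locus, and the marked-ancestor query, combined with the per-round constant amount of deferred work, still yields the correct insertion point.

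I would take this correctness argument from~\cite{breslauer13near} rather than reprove it from scratch. The observation then follows: apart from transiently absent or not-yet-redirected W-links, the maintained structure is a fully functional $\STree(T[i..])$.
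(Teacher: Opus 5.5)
Your outline matches what the paper relies on. Steps~1--3 are carried out completely in each round, so the topology is always the current suffix tree. Only the W-links of Steps~4--5 are deferred and processed by increasing depth, and the non-interference claim is ultimately taken from Breslauer and Italiano. The gap is in the mechanism you attribute to them, which does not work, for two reasons.

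First, the Step~4 links are not redirections. By the choice of $\alpha$, the nodes strictly between $\lambda$ and $\alpha$ had no $\Weiner_c$-link at all before the round, so while pending they are invisible to the marked-ancestor query. Suppose the correct answer of a later Step~1 query for $c$ were such a pending node $u$, i.e., $u$ is the lowest ancestor of the query leaf with $c\cdot\strlabel(u)$ occurring. Then the query could not return $u$; it would return a proper ancestor of $u$ and give an insertion point that is too short. That some ``previously reached ancestor still carries a valid link'' is exactly what produces the wrong answer.

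Second, for Step~5, a stale link $\Weiner_c(q)=\beta$ still points below the correct locus, but no longer to the closest node below it. The locus of $c\cdot\strlabel(q)$ now lies above $\gamma$, so splitting ``the parent edge of $\beta$'' in Step~2 (now the edge from $\gamma$ to $\beta$) would be wrong. Repairing this would need an ancestor search that the algorithm does not perform. So ``only refines a pointer'' does not imply ``still yields the correct insertion point''.

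The missing idea is that correctness does not come from pending links being harmless when used, but from their never being used: no later query ever has to find a node whose update is still pending. This is what the increasing-depth order at a constant rate (at least two nodes per round) buys. Roughly, the processed frontier of a pending path descends at least as fast as the depth of the leaf from which the next Step~1 query starts can grow. The reason is that the newly inserted leaf lies only a constant number of levels below $\delta$: the appendix lemma gives $\fnDepth(\lambda_i)\le\fnDepth(\delta)+3$. Hence still-pending nodes are never ancestors of a later query leaf. If you rely on Breslauer--Italiano, this is the property to invoke; drop the refinement argument.
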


\subsection{Improvements of Breslauer--Italiano algorithm}\label{sec:improvements}

Altogether, the Breslauer--Italiano algorithm achieves worst-case $O(\sigma \log\log n)$ time per read letter, where $\sigma$ is the alphabet size. For $\sigma=O(1)$ this results in $O(\log\log n)$ time per read letter. 
This technique was the first to achieve a double logarithmic worst-case bound, as only an $O(\log n)$ bound had been known previously~\cite{amir05towards}. 

An obvious weakness of the Breslauer--Italiano approach is the linear dependency on the alphabet size. Subsequent works proposed improvements on this point. 
Kopelowitz~\cite{kopelowitz12online} proposed a randomized algorithm for maintaining the suffix tree online in $O(\log\log n + \log\log\sigma)$ worst-case \textit{expected} time per letter. 
Kucherov and Nekrich~\cite{kucherov17fullfledged} generalize the $O(\log\log n)$ bound of~\cite{breslauer13near} to log-sized alphabets, more precisely to $\sigma=O(\log^{1/4} n)$, using the \textit{generalized van Emde Boas data structure}  of Giora and Kaplan~\cite{10.1145/1541885.1541889}. An interesting improvement proposed in~\cite{kucherov17fullfledged} is that weak W-links are not maintained at all but are computed on the fly in the lazy fashion by using another combination of colored predecessor and LCA queries. This greatly simplifies the steps of the Breslauer--Italiano algorithm: Steps~\ref{it:copyweiner}, \ref{update1} and~\ref{update2} become constant time and only Step~\ref{it:marked_ancestor} remains to be implemented. The latter is essentially reduced to answering the colored predecessor queries on a dynamic colored list, for which the authors borrow an $O(\log\log n)$ solution of~\cite{10.1145/1541885.1541889} for up to $O(\log^{1/4} n)$ colors.  Mortensen~\cite{mortensen06fully} proposes a $O(\log^2\log n)$ solution for the dynamic colored predecessor problem for any number of colors, which can be plugged in as an alternative to compute \textsc{InsertionPoint}.

On the other hand, Fischer and Gawrychowski~\cite{fischer15wexponential} obtain $O(\log\log n + \frac{\log^2\log \sigma}{\log\log\log \sigma})$ time using a sophisticated technique of \textit{Wexponential search trees}. They store only some of the weak W-links and show how to implement Steps~\ref{it:marked_ancestor} and  \ref{it:copyweiner} within this time bound.  Steps~\ref{update1} and~\ref{update2} can be deamortized as in the Breslauer--Italiano algorithm. A summary of the known time complexities $\timeSU$ for processing \textsc{SuffixUpdate} is given in Table~\ref{tab:timeSU}.

\begin{table}[t]
\begin{minipage}{0.3\linewidth}
	\centering
	\caption{Time complexities $\timeSU$ for \textsc{SuffixUpdate}}
	\label{tab:timeSU}
\end{minipage}
\hfill
\begin{minipage}{0.55\linewidth}
	\newcommand*{\ditto}{--\raisebox{-0.5ex}{"}--}
	\begin{tabular}{l l | }
		\toprule
		Time complexity $\timeSU$ & Reference  and comments \\
		\midrule
		$O(\lg n)$ & \cite{amir05towards}  \\
		$O(\sigma \log\log n)$ & \cite{breslauer13near} \\
		$O(\log\log n + \log\log \sigma)$ & \cite{kopelowitz12online}, expected time \\
		$O(\log\log n + \frac{\log^2\log \sigma}{\log\log\log \sigma})$ & \cite{fischer15wexponential} \\
		$O(\log\log n)$ & \cite{kucherov17fullfledged}, for $\sigma = O(\log^{1/4} n)$  \\
\bottomrule
	\end{tabular}
\end{minipage}
\end{table}

Note finally that the idea of maintaining online an indexing data structure for the inverted string has been applied to indexes based on the Burrows--Wheeler transform (BWT)~\cite{burrows94bwt} as well. 
Similar to the suffix tree, it is more convenient to maintain a BWT-index for the inverted string, by adding only one suffix at each round~\cite{policriti18lz77,ohno17rlbwt}.
Updating the BWT-index boils down to maintaining a dynamic string that supports access, rank, and select queries, for which logarithmic solutions exist~\cite{navarro14dynamic}, which, however, are optimal~\cite{fredman89cell}.
Thus, there is no great hope for improving the time bounds for online BWT-index maintenance beyond logarithmic time per letter.
In what follows, we focus on a line of research that applies this online BWT-index construction for computing the LRS array, and show that we can obtain better time bounds by relying on the suffix tree construction instead.

\section{Longest Repeating Suffix Array}\label{sec:lrs}
From now on, we will be considering the online setting where the input string $T$ is provided letter by letter from left to right, and we will be maintaining a suffix tree of the \textit{inverted} string $\pali{T}$, using a variant of Weiner's algorithm. Thus, a \textit{suffix} of a current string will become an \textit{inverted prefix} of the actually indexed string, etc. In our presentation, we will consider both the string and its inverted image and alternate between the two depending on the context which should be clear to the reader. 

The \emph{longest repeating suffix array} (LRS) of a string $T[1..n]$ is an array $\LRS[1..n]$ such that for each position $i \in [1..n]$, $\LRS[i]$ is the length of the longest suffix of $T[1..i]$ that occurs at least twice in $T[1..i]$. In other words, $\LRS[i]$ is the length of the longest suffix of $T[1..i]$ occurring also in $T[1..i-1]$.

Okanohara and Sadakane~\cite{okanohara08online} were the first to study computing this array online, 
for which they obtained \(O(\log^3 n)\) worst-case time per letter. Prezza and Rosone~\cite{prezza20faster} obtained \(O(\log^2 n)\) \textit{amortized} time, with $O(n \lg n)$ worst-case delay per letter. 
Importantly, both approaches~\cite{okanohara08online} and~\cite{prezza20faster}  solve the problem under additional constraints on the space used by the algorithm: \textit{compact space} for the former and  zero-entropy \textit{compressed space} for the latter. 

We here show that, without the compact space constraint,  we can efficiently compute $\LRS$ in near-real-time as a by-product of maintaining $\STree$.
Observe that under Weiner's approach described in Section~\ref{sec:weiner}, the problem becomes the following. 

\problembox{\textsc{LongestRepeatingPrefix}\\
  \textbf{Input:} suffix tree $\STree(T[i+1..])$ and letter~$T[i]$.\\
  \textbf{Output:} the longest repeating prefix of $T[i..]$.
}

We solve \textsc{LongestRepeatingPrefix} by revisiting the methodology of Amir et al.~\cite{amir02online}.
They reduce the problem to \textsc{InsertionPoint}, since the string label of the insertion point is the longest prefix of $T[i..]$ that already appears in $T[i+1..]$.
Thus, updating $\STree(\pali{T})$ using Weiner's algorithm allows us to compute $\LRS$. 
When processing letter $T[i]$, we query \textsc{LongestRepeatingPrefix} on $\STree(\pali{T[1..i-1]})$ with letter $T[i]$. The answer is $\LRS[i]$.

\begin{theorem}
  The longest repeating suffix array $\LRS$ of a string $T[1..n]$ can be computed online in $O(\timeSU)$ worst-case time per letter.
  \label{thm:LRS-online}
\end{theorem}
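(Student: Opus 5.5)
We maintain the suffix tree of the inverted prefix read so far, $\STree(\pali{T[1..i-1]})$, using the near-real-time variant of Weiner's algorithm from Section~\ref{sec:weiner} with the improvements of Section~\ref{sec:improvements}. When letter $T[i]$ arrives, the new suffix of the indexed string is $\pali{T[1..i]} = T[i]\,\pali{T[1..i-1]}$. So appending $T[i]$ to $T$ is exactly one call to \textsc{SuffixUpdate} on $\pali{T}$, and it costs $O(\timeSU)$ worst-case time.

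\textbf{Correctness of the reduction.} A suffix $X$ of $T[1..i]$ occurs in $T[1..i-1]$ if and only if $\pali{X}$ is a prefix of $\pali{T[1..i]}$ that occurs in $\pali{T[1..i-1]}$. Hence $\LRS[i]$ equals the length of the longest prefix of $\pali{T[1..i]}$ appearing in $\pali{T[1..i-1]}$. By definition, this is the string depth of the locus returned by \textsc{InsertionPoint}, i.e.\ the answer to \textsc{LongestRepeatingPrefix}.

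\textbf{Reading off the answer.} Every \textsc{SuffixUpdate} implementation in Table~\ref{tab:timeSU} computes this insertion point explicitly (Step~\ref{it:copyweiner}) before attaching the new leaf. We therefore need its string depth in $O(1)$ extra time, and there are two cases.
\begin{itemize}
\item If the W-link $\Weiner_c(\alpha)$ is hard, the insertion point is the node $\beta$, whose string depth is stored.
\item If the link is soft, the insertion point is the newly created node $\gamma$, whose string depth is $\fnDepth(\alpha)+1$, since $\strlabel(\gamma) = c\cdot\strlabel(\alpha)$. This depth is recorded when $\gamma$ is created.
\end{itemize}
Alternatively, the new leaf $\lambda'$ hangs below the insertion point, so its parent's depth gives the answer. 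String depths of all nodes are maintained as node attributes at no asymptotic cost.

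\textbf{Main obstacle.} The subtle step is that the deamortized algorithms do not keep all W-links up to date: the updates of Steps~\ref{update1}--\ref{update2} are deferred, or soft links are computed lazily. We must argue that the insertion point, and hence its depth, is nevertheless computed correctly at every round. This follows from the Observation that the tree is fully functional apart from temporarily missing W-links. The cited algorithms guarantee that Step~\ref{it:marked_ancestor} still finds the correct $\alpha$, because pending updates are processed in order of increasing depth, or because soft links are recomputed on demand. Consequently $\LRS[i]$ is output within the $O(\timeSU)$ worst-case time of round $i$.
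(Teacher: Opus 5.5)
Your proposal is correct and follows the paper's argument. The paper likewise maintains $\STree(\pali{T})$ by Weiner-style \textsc{SuffixUpdate} and outputs $\LRS[i]$ as the string depth of the insertion point, i.e.\ the answer to \textsc{LongestRepeatingPrefix} on $\STree(\pali{T[1..i-1]})$ with letter $T[i]$. One notational fix: in the paper $\fnDepth$ denotes node depth, so write the string depth of $\gamma$ as $|\strlabel(\alpha)|+1$ rather than $\fnDepth(\alpha)+1$.
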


\section{Longest Previous Factor Array}\label{sec:lpf}

The \emph{longest previous factor array} (LPF) of a string $T[1..n]$ is an array $\LPF[1..n]$ such that for each position $i \in [1..n]$, 
$\LPF[i]$ is the length of the longest prefix of $T[i..n]$ that has another occurrence starting at a position $j<i$.

We first describe how to convert $\LRS$ to $\LPF$ and vice versa in linear time \textit{offline}. 
A similar conversion between border and prefix arrays has been studied in the literature~\cite{DBLP:conf/iwoca/BlandKS13}. 
For that, we observe that we can restate the definitions of $\LRS$ and $\LPF$ as follows:
Let $X_i = \{ j \le i \mid j + \LPF[j] - 1 \ge i \}$ be the set of positions $j$ such that the longest previous factor starting at $j$ covers position $i$.
If $X_i \neq \emptyset$, $|X_i| = i - \min(X_i) + 1$.
Consequently, 
\[
\LRS[i] = 
\begin{cases}
|X_i| & \text{if } X_i \neq \emptyset, \\
0 & \text{otherwise}.
\end{cases}
\]
 
The problem of computing $\LRS$ from $\LPF$ is then reduced to finding the minimum in $X_i$ for each $i$.
For that, we focus on \emph{irreducible} LPF values, 
i.e., positions $i$ such that $\LPF[i] > \LPF[i-1]-1$. The important invariant is that irreducible LPF intervals of the form $[i..i+\LPF[i]-1]$ cannot nest but overlap or are disjoint. 
Thus, we can determine $\LRS[i]$ by maintaining the leftmost irreducible LPF interval covering position $i$.
To this end, we scan the irreducible LPF values from left to right. 
More precisely,
with a sweep-line $i \in [1..n]$ from left to right we maintain the leftmost irreducible LPF interval $[j..j+\LPF[j]-1]$ such that $i \in [j..j+\LPF[j]-1]$. 
Then $j$ is the minimum $j$ in $X_i$.

We can also compute $\LPF$ from $\LRS$ in linear time by scanning the $\LRS$ array left-to-right and computing $\LPF$ using the following rule. For each $i$ such that $\LRS[i] \le \LRS[i-1]$, 
we set $\LPF[j]=i-j$ for all $j\in [i-\LRS[i-1]..i-\LRS[i]]$.
For $i=n$, we complete the computation by setting $\LPF[j]=n-j+1$ for all $j\in [n-\LRS[n]+1..n]$.

\begin{theorem}
Offline conversions between the longest repeating suffix array $\LRS[1..n]$ and the longest previous factor array $\LPF[1..n]$ of a string $T[1..n]$ can be done in $O(n)$ time.
\end{theorem}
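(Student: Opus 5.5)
We prove the two directions separately, both as a single left-to-right scan, and rely on the characterization $\LRS[i]=|X_i|=i-\min(X_i)+1$ (or $0$ if $X_i=\emptyset$). First we justify this identity. If $\ell=\LRS[i]>0$, then $T[i-\ell+1..i]$ occurs earlier, so $\LPF[i-\ell+1]\ge \ell$ and hence $i-\ell+1\in X_i$. Conversely, if $j\in X_i$, then $T[j..i]$ is a prefix of the longest previous factor at $j$ and so occurs at some position $j'<j$. Thus $T[j..i]$ is a suffix of $T[1..i]$ that also ends at $j'+i-j<i$, i.e., it occurs in $T[1..i-1]$. This gives $\LRS[i]\ge i-j+1$, so $\LRS[i]=i-\min(X_i)+1$. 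The same argument shows that $X_i$ is an interval ending at $i$. This uses the monotonicity $\LPF[j+1]\ge \LPF[j]-1$, which holds because a previous occurrence of $T[j..]$ shifted by one is a previous occurrence of $T[j+1..]$.

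For $\LPF\to\LRS$, we scan $i$ from $1$ to $n$ and keep a pointer $j$, the smallest index with $j+\LPF[j]-1\ge i$. The key monotonicity claim is that the function $i\mapsto \min(X_i)$ is non-decreasing. To see this, note that if $j\in X_{i+1}$ with $j\le i$, then $j\in X_i$; the remaining case $j=i+1$ is trivially at least $\min(X_i)$. Hence the pointer only moves right. At step $i$ we advance $j$ while $j\le i$ and $j+\LPF[j]-1<i$. If $j\le i$ and $j$ covers $i$, we output $\LRS[i]=i-j+1$; otherwise we output $0$ and leave $j$ at $i+1$. We must also confirm that skipping a $j$ is never regretted. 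This holds because intervals ending before $i$ also end before every later $i'$. The total pointer movement is $O(n)$. Restricting attention to irreducible values is then an optimization only, since the minimum covering index is always irreducible.

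For $\LRS\to\LPF$, we use the dual relation $\LPF[j]=\max\{i-j+1 : \min(X_i)\le j\le i\}$, which follows from the same characterization. The start $s_i=i-\LRS[i]+1$ is non-decreasing in $i$ by the monotonicity shown above. It follows that $\LPF[j]=i^*-j+1$, where $i^*$ is the last $i$ with $s_i\le j$, or $\LPF[j]=0$ if $s_j>j$. Thus a two-pointer scan assigns each $j$ once. This is exactly the rule stated above: whenever $\LRS[i]\le\LRS[i-1]$, the start jumps from $s_{i-1}$ to $s_i$, and the positions $j\in[s_{i-1}..s_i-1]$ get $i^*=i-1$, i.e.\ $\LPF[j]=i-j$. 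The final flush at $i=n$ assigns $\LPF[j]=n-j+1$ to the remaining $j\ge s_n$.

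The main obstacle is verifying this index bookkeeping carefully, in particular the boundary cases where $\LRS[i]=0$ or $s_i=i+1$. In those cases $\LPF[j]$ must come out as $0$, and we need to check that every $j$ is assigned exactly once. Each scan does $O(1)$ work per index plus amortized pointer moves, giving $O(n)$ time for both conversions.
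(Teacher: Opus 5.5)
Your proof is correct and follows the paper's route. You use the same characterization $\LRS[i]=i-\min(X_i)+1$ and a left-to-right sweep that keeps the leftmost covering $\LPF$ interval; you scan all positions and treat irreducibility as an optimization, whereas the paper sweeps only the non-nesting irreducible intervals. For the other direction you use exactly the paper's rule, $\LPF[j]=i-j$ for $j\in[s_{i-1}..s_i-1]$, plus the final flush. You also justify what the paper only asserts: the $X_i$ characterization, the monotonicity of $s_i=\min(X_i)$, and the boundary cases where $\LRS[i]=0$.
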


Like the LCP array~\cite{sadakane07compressed}, the LPF array can be recovered from the list of its $r$ irreducible values in linear time~\cite{bannai17square},
where it has been shown how to store the irreducible values in $2n + o(n)$ bits of space and still support constant-time access to either the $i$-th irreducible value or $\LPF[i]$, 
for any given $i$.
By an analogous argument, we can store the irreducible values of $\LRS$ within the same space bound with the same query functionality~\cite{prezza20faster}.
It is therefore natural to ask whether the above conversions can be done on the irreducible values only.
Actually, we can observe that the above algorithm for converting between $\LRS$ and $\LPF$ can be adapted to work on the irreducible values only, by skipping the assignments for reducible values.

\begin{corollary}
Offline conversions between the $O(r)$ irreducible LRS values and the $O(r)$ irreducible LPF values can be done in $O(r)$ time.
\end{corollary}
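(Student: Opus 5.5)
We adapt the two linear-time sweeps from the preceding theorem so that they only touch the irreducible entries. First we fix the representations. For $\LPF$ we keep the list of pairs $(i,\LPF[i])$ over the irreducible positions $i$, i.e.\ those with $\LPF[i] > \LPF[i-1]-1$. Every reducible value then satisfies $\LPF[i]=\LPF[i-1]-1$, so it can be recovered from the closest irreducible position to its left. For $\LRS$ the natural notion is dual. Reducible values are those with $\LRS[i]=\LRS[i-1]+1$, meaning the repeating suffix simply extends. The irreducible positions are those with $\LRS[i]\le \LRS[i-1]$, which are exactly the positions that trigger assignments in the $\LRS\to\LPF$ rule.

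\textbf{From irreducible $\LRS$ to irreducible $\LPF$.} We scan the irreducible $\LRS$ positions $i_1<i_2<\dots$ in order. At each such $i$, the rule assigns $\LPF[j]=i-j$ for the whole block $j\in[i-\LRS[i-1]..i-\LRS[i]]$. Here $\LRS[i-1]$ is available in $O(1)$ time: it equals $\LRS[i']+(i-1-i')$, where $i'$ is the previous irreducible position. Inside a block, $\LPF[j]=i-j$ decreases by exactly one per step, so every entry except possibly the first $j=i-\LRS[i-1]$ is reducible. We therefore output the single pair $(i-\LRS[i-1],\,\LRS[i-1])$ per block. The final block induced by $i=n$ is handled the same way. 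Consecutive blocks tile the positions contiguously, and each block starts with an $\LPF$ value at least as large as the continuation of the previous block. This is the main point to verify: it should follow from the fact that $\LRS[i]\le\LRS[i-1]$ ends one repeat while a new, at least as long, chain begins. Once it holds, the block starts are exactly the irreducible $\LPF$ positions, up to filtering out starts where equality gives a reducible value, which is an $O(1)$ check. The total work is $O(r)$.

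\textbf{From irreducible $\LPF$ to irreducible $\LRS$.} We use the sweep-line characterization $\LRS[i]=i-\min X_i+1$. Here $\min X_i$ is the start $j$ of the leftmost irreducible interval $[j..j+\LPF[j]-1]$ covering $i$. Since irreducible intervals do not nest, their left and right endpoints are both increasing. Hence $\min X_i$ is a nondecreasing step function, which changes only when the current leftmost interval ends at $e=j+\LPF[j]-1$. At position $e+1$, $\min X_{e+1}$ jumps to the start $j'$ of the next irreducible interval if $j'\le e+1$; otherwise $\LRS[e+1]=0$. Between such events $\LRS$ increments by one, so it is reducible there. Consequently, the irreducible $\LRS$ positions are among the points $e+1$, one per irreducible $\LPF$ interval. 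We emit $(e+1,\,e+1-j'+1)$, or $(e+1,0)$ if no next interval covers $e+1$. If the next interval starts at some $j'>e+1$, positions $e+1,\dots,j'-1$ have value $0$. The first of these, $e+1$, is already emitted; the later ones are reducible only if we regard $0\to 0$ as irreducible, so under that convention we additionally emit $j'$ itself. In either case each irreducible $\LPF$ value yields $O(1)$ outputs, for $O(r)$ time overall.

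The main obstacle is the bookkeeping at boundaries. We must check that each emitted pair really is irreducible, or cheaply filter it, and that no irreducible position is missed. This amounts to a short case analysis on whether consecutive intervals overlap, touch, or leave a gap. We would carry it out using the non-nesting invariant stated before the preceding theorem.
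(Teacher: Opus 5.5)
Your approach matches the paper's, which proves the corollary in one sentence: the two linear-time sweeps are restricted to irreducible entries by skipping the assignments for reducible ones. Your version is more detailed.

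The $\LRS\to\LPF$ half is correct, and the point you leave open is a one-line check. Let $i'$ be the previous irreducible $\LRS$ position. The previous block ends at $j-1=i'-\LRS[i']$ with value $\LRS[i']$, while $\LPF[j]=\LRS[i-1]=\LRS[i']+(i-1-i')\ge\LRS[i']$. So every block start is irreducible and nothing ever needs filtering. Your filtering condition is also backwards: equality $\LPF[j]=\LPF[j-1]$ is irreducible, since reducibility means $\LPF[j]=\LPF[j-1]-1$.

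The $\LPF\to\LRS$ half has a wrong step, in exactly the boundary case you defer. Under your own convention, a step $0\to 0$ in $\LRS$ is irreducible and a step $0\to 1$ is reducible. So ``additionally emit $j'$'' is incorrect whenever $\LPF[j']\ge 1$. Take $T=aaba$, where $\LPF=\LRS=[0,1,0,1]$. After the interval $[2..2]$, the next irreducible $\LPF$ position is $j'=4>e+1=3$, but $\LRS[4]=\LRS[3]+1$ is reducible. Conversely, if ``next interval'' ranges only over intervals of positive length, the positions $e+2,\dots,j'-1$ are irreducible and never emitted; for $T=aabca$ this misses position $4$.

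The repair is to drop the extra clause and treat every irreducible $\LPF$ position $k$ with $\LPF[k]=0$ (that is, $k=1$ or $\LPF[k-1]=0$) as a degenerate interval with end $k-1$. Reducible $\LPF$ entries decrease by exactly one and stay nonnegative, so the next irreducible position after $j$ is at most $e_j+2$. Hence a gap contains only the zero at $e_j+1$, which you already emit. Moreover, $\LPF[k+1]\ge\LPF[k]-1$ together with irreducibility makes the ends $e_j$ of all irreducible positions, degenerate ones included, strictly increasing. One then checks that $j\mapsto e_j+1$ (discarding $n+1$) is a bijection onto the irreducible $\LRS$ positions, with the value you already compute. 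This gives correctness, completeness and the $O(r)$ bound together.
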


The left-to-right conversion of $\LRS$ into $\LPF$ can also be used to compute the $\LPF$ values online by computing $\LRS$ values online as described in Section~\ref{sec:lrs}, with a delay.
By delay we mean that given $T[1..i]$ is the currently processed text, then we have not yet computed the $\LPF[j]$ last values for the positions $j \in [i-\LRS[i]+1..i]$.
To turn our offline left-to-right conversion into (near-)real-time, we need to deamortize the sequence of assignments triggered by $\LRS[i]<\LRS[i-1]+1$. This can be done by distributing these assignments over multiple rounds and setting a constant number of values at each round, similarly to Steps~\ref{update1} and~\ref{update2} of Breslauer--Italiano algorithm (\cref{sec:weiner}). 
The delayed assignments can be stored in a FIFO data structure. 
By induction we observe that the maximal number of delayed assignments is bounded by $\max_i \LRS[i]$. 
We thus obtain the following result.

\begin{theorem}
  The longest previous factor array $\LPF$ of a string $T[1..n]$ can be computed online in $O(\timeSU)$ worst-case time per letter, with at most $\max_i \LRS[i]$ delayed ending values at each round.
\label{thm:LPF-online}
\end{theorem}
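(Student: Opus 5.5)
The plan is to combine the online $\LRS$ computation of \cref{thm:LRS-online} with a deamortized version of the offline left-to-right conversion from $\LRS$ to $\LPF$ described above.

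\emph{Correctness of the offline rule.} At round $i$ we obtain $\LRS[i]$ in $O(\timeSU)$ worst-case time. The value $\LPF[j]$ becomes determined as soon as the previous occurrence of $T[j..]$ can no longer be extended. Concretely, $j \in [i-\LRS[i-1]..i-\LRS[i]]$ means two things.
\begin{itemize}
\item $T[j..i-1]$ is a suffix of $T[1..i-1]$ with an earlier occurrence, so $\LPF[j] \ge i-j$.
\item $T[j..i]$ has no earlier occurrence, since otherwise $\LRS[i] \ge i-j+1$. Hence $\LPF[j] = i-j$.
\end{itemize}
Every $j$ receives exactly one assignment over the whole scan. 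The ranges $[i-\LRS[i-1]..i-\LRS[i]]$ are consecutive and disjoint, because $i-\LRS[i]$ is nondecreasing, using $\LRS[i]\le\LRS[i-1]+1$. This inequality also shows the rule is triggered precisely when the window of open positions shrinks from the left.

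\emph{Deamortization.} Instead of performing the assignments of a range at once, we push the range $[i-\LRS[i-1]..i-\LRS[i]]$, tagged with its value offset $i$, onto a FIFO queue. At each round we pop and perform a constant number $k\ge 2$ of pending assignments $\LPF[j]=i-j$. Storing ranges rather than individual positions makes each push $O(1)$. Since every enqueued position already has its final value known, the order of output does not matter for correctness; only the delay needs bounding.

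\emph{Bounding the delay.} This is the main obstacle. We must show that at the end of round $i$ the number of positions without a final $\LPF$ value is at most $\max_i \LRS[i]$, rather than growing without bound. The unresolved positions are of two kinds:
\begin{itemize}
\item the open window $[i-\LRS[i]+1..i]$, of size $\LRS[i]$;
\item the positions sitting in the queue.
\end{itemize}
Let $Q_i$ be the queue length. Each round adds one new position, which either enters the window or is immediately enqueued, and processes $k$ queued positions. Hence
\[
\LRS[i]+Q_i \le \LRS[i-1]+Q_{i-1}+1-\min(k,Q_{i-1}+\cdot).
\]
By induction on $i$, the total number of pending positions never exceeds $\max_{i'\le i}\LRS[i']$. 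A position can only enter the queue by first leaving the window. So if the queue is nonempty, it is drained at rate $k>1$ while at most one position is added per round, and the sum stays below the previous maximum of the window size. I would pick $k=2$ and check carefully the base case and the situation in which a large range is enqueued in a single round.

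At the final letter, the remaining window $[n-\LRS[n]+1..n]$ is flushed with $\LPF[j]=n-j+1$. Together with the $O(1)$ queue work per round, this gives $O(\timeSU)$ worst-case time per letter.
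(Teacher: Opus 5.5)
Your proposal is correct and takes essentially the same route as the paper: online $\LRS$ via \cref{thm:LRS-online}, the offline left-to-right $\LRS$-to-$\LPF$ rule, a FIFO of pending ranges with a constant number of assignments per round, and an inductive bound on the delay. Your displayed recurrence is garbled, but the induction closes cleanly: at the end of round $i$, the count $P_i=\LRS[i]+Q_i$ either has $Q_i=0$, so $P_i=\LRS[i]$, or satisfies $P_i=P_{i-1}+1-k\le P_{i-1}$, hence $P_i\le\max_{i'\le i}\LRS[i']$ — and this already works with $k=1$.
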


\section{Lempel--Ziv factorization (LZ77)}\label{sec:lz77}
The LZ77 factorization of a string \(T\) is defined as follows:
a factorization \(T = F_1 \cdots F_z\)
it is the \emph{LZ77 factorization} of \(T\) if each next factor \(F_x\), for \(x \in [1..z]\),
is either the first occurrence of a letter 
or the longest prefix of $F_x \cdots F_z$ that occurs at least twice in $F_1 \cdots F_x$.
The factorization can be written like a macro scheme~\cite{storer82lzss}, 
i.e., by a list storing either plain letters or pairs of referred positions and lengths, 
where a referred position is a previous text position from where the letters of the respective factor can be copied.

While practical compressors based the LZ77~\cite{ziv77lz} factorization such as gzip and 7zip work with a sliding window over the input text,
the original LZ77 factorization considers the entire text as the search buffer.
For the online setting,
Gusfield's textbook~\cite[APL~16]{gusfield97algorithms} presents an algorithm that runs in \(O(\log \sigma)\) amortized time per letter.
The main idea is to maintain the suffix tree of the text read so far using Ukkonen's algorithm~\cite{ukkonen95online}, and simultaneously compute each next factor by a top-down traversal of the tree. 
Since we can easily extract the LZ77 factorization from the $\LRS{}$ or $\LPF{}$ arrays,
algorithms of Sections~\ref{sec:lrs},\ref{sec:lpf} can be also used to compute it. 
Direct online algorithms for computing the LZ77 factorization have also been proposed, additionally focusing on space saving. 
In this line of research,
Starikovskaya~\cite{starikovskaya12computing} proposed an algorithm in \(O(\log^2 n)\) amortized time per letter
by returning to Ukknonen's suffix tree construction algorithm.
Subsequently, Yamamoto et al.~\cite{yamamoto14faster} obtained \(O(\log n)\) amortized time per letter by
constructing the directed acyclic word graph~\cite{blumer85dawg}, and finally
Policriti and Prezza~\cite{policriti15fast} achieved the same time bounds with a dynamic FM-index in zero-order entropy compressed space.

Here we diverge from this line of research by allowing $O(n)$ words of space but focusing on the worst-case time per letter.
Using the near-real-time computation of $\LRS$ (Section~\ref{sec:lrs}), 
it is straightforward to compute the LZ77 factorization within the same time bounds. 

Online computation of LZ77 factorization amounts to reporting if the current position $i$ extends the current factor or starts a new one. In the latter case, the previous factor ends at position $i-1$ and $i$ is the first position of a new factor. 
We can determine this by comparing $i-\LRS[i]$ with the beginning $s_x$ of the current factor $F_x$. If $i-\LRS[i]+1\leq s_x$, then $i$ extends $F_x$, otherwise $F_x$ ends at $i-1$ and $i$ starts a new factor $F_{x+1}$. 
We observe that $\LRS[i]=0$ if and only if $T[i]$ is the leftmost occurrence of a letter.

\begin{theorem}
  The LZ77 factorization of a string $T[1..n]$ can be  computed online in $O(\timeSU)$ worst-case time per letter.
\end{theorem}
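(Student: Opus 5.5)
We reduce the statement to \cref{thm:LRS-online}. While reading $T$ letter by letter, we maintain $\STree(\pali{T[1..i]})$ with the Breslauer--Italiano algorithm or one of its improvements from \cref{sec:improvements}, and obtain $\LRS[i]$ in $O(\timeSU)$ worst-case time when $T[i]$ arrives. Besides this, we store only a constant number of words of factorization state: the start position $s_x$ of the current factor $F_x$, and a candidate referred position $p_x$ for it. We output the factors themselves as they are closed. Each round then performs a single comparison and $O(1)$ bookkeeping, so the per-letter cost is dominated by the $\LRS$ computation.

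\textbf{Correctness.} The key step is to justify the rule stated above: $i$ extends $F_x$ if and only if $i-\LRS[i]+1\le s_x$.

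By definition of $\LRS$, the suffix $T[i-\LRS[i]+1..i]$ is the longest suffix of $T[1..i]$ that also occurs in $T[1..i-1]$. The set of suffixes of $T[1..i]$ occurring in $T[1..i-1]$ is closed under taking shorter suffixes. Hence $T[s_x..i]$ has a previous occurrence, i.e., an occurrence starting before $s_x$ with overlap allowed, exactly when $i-s_x+1\le \LRS[i]$.

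We need that this "previous occurrence" notion matches the LZ77 definition. The factor definition requires the prefix to occur at least twice in $F_1\cdots F_x$. This is equivalent to $T[s_x..i]$ occurring in $T[1..i-1]$ as a string ending at or before $i-1$, which in turn is equivalent to it starting before $s_x$. So the comparison decides exactly whether $F_x$ can be extended by $T[i]$.

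By induction on $i$, the factor boundaries produced this way coincide with the greedy LZ77 factorization. A factor closes at $i-1$ precisely when extending it to $i$ fails. The new factor starting at $i$ is either a fresh letter, when $\LRS[i]=0$, which happens iff $T[i]$ is a first occurrence, or starts a copy of length at least one. In the latter case we must re-check at each subsequent step, always against the updated $s_x=i$.

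\textbf{Main obstacle: the referred position.} To output a factor as a macro-scheme pair, we need an earlier occurrence of $F_x$, not just its length.

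We plan to store in every leaf of $\STree(\pali{T})$ its text position. For each internal node, we store the position of some leaf below it, assigned when the node is created. In \textsc{InsertionPoint}, the locus found is a node, or lies on an edge to a node, whose stored position gives an earlier ending position $e$ of the string $T[i-\LRS[i]+1..i]$.

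Whenever $i$ extends $F_x$, we set $p_x := e-(i-s_x)$. This is valid because the current factor $T[s_x..i]$ is a suffix of that string. Keeping $p_x$ updated costs $O(1)$ per round, since a new node $\gamma$ can inherit the position of its child $\beta$ and a new leaf gets $i$. The deferred W-link updates do not affect node positions, so this bookkeeping stays within $O(\timeSU)$ worst-case time per letter.
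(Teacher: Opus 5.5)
Your proposal is correct and follows the same route as the paper: compute $\LRS[i]$ online via \cref{thm:LRS-online}, and let $i$ extend the current factor exactly when $i-\LRS[i]+1\le s_x$, with $\LRS[i]=0$ signalling a fresh letter. Your closure-based justification of this rule and your recovery of the referred position from a leaf position stored at or just below the insertion point are sound additions, since the paper's proof only reports factor boundaries; just make sure the update $p:=e$ is also applied at the first position $i$ of a new factor whenever $\LRS[i]\ge 1$.
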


\section{Minimal Unique Substrings}\label{sec:mus}
Given a substring $S$ of a text $T[1..n]$, let $\numocc_T(S)$ denote the number of occurrences of $S$ in $T$.
A substring $S$ of $T$ is called \emph{unique} in $T$ if $\numocc_T(S) = 1$ and  \emph{repeating} in $T$ if $\numocc_T(S) \ge 2$. 
A unique substring $S$ of $T$ is called \emph{a minimal unique substring}
of $T$ if $S$ is unique and any proper substring of $S$ is repeating in $T$.
Since a unique substring $S$ of $T$ has exactly one occurrence in $T$,
it can be identified with a unique interval $[\ell..r] \subseteq [1..n]$ with $S = T[\ell..r]$.
We denote the set of intervals corresponding to the MUSs of $T$
by $\MUS(T) = \{[s..t] \mid T[s..t]~\text{is a MUS of}~T\}$.
By the definition of MUSs, 
$[s..t] \in \MUS(T)$ if and only if
(a) $T[s..t]$ is unique in $T$,
(b) $T[s+1..t]$ is repeating in $T$, and
(c) $T[s..t-1]$ is repeating in $T$.

Mieno et al. \cite{mieno20minimal} present an algorithm for computing MUSs in a sliding window and, in particular, study how $\MUS(T)$ can be updated when a new letter $T[j+1]$ is appended to $T[1..j]$. We summarize these updates here, slightly modifying the description of  \cite{mieno20minimal}. 
Recall that $\LSuf{j}$ denotes the longest repeating suffix of $T[1..j]$, and let $\SSuf{j}$ be the shortest suffix of $T[1..j]$ with exactly two occurrences in $T[1..j]$ (that is, having exactly one previous copy).  
Since $\LSuf{j}$ can admit the empty string, $\LSuf{j}$ always exists. 
On the other hand, $\SSuf{j}$ may not exist when no suffix of $T[1..j]$ occurs exactly twice, i.e., $\numocc_{T[1..j]}(T[i..j]) \neq 2$ for all $i \in [1..j]$.

The size of $\LSuf{j}$ and the ending position of the previous copy of $\SSuf{j}$ (if $\SSuf{j}$ exists) are the two parameters we need to know in order to specify all possible modifications to $\MUS(T[1..j])$ when $T[j+1]$ is appended to $T[1..j]$. 
These modifications can consist of one deletion of a MUS and up to three additions of new MUSs, specified below. 

\begin{enumerate}[label=(\arabic*):,ref=(\arabic*),series=musCond]  
  \item 
  If $|\LSuf{j+1}| \le |\LSuf{j}|$, then $[j+1-|\LSuf{j+1}| ..j+1]$ is a new MUS to be added to $\MUS(T[1..j+1])$.

  \myProof{} By definition of $\LSuf{j+1}$, on the one hand, $T[j+1-|\LSuf{j+1}| ..j+1]$ is unique. 
  On the other hand, both $T[j+1-|\LSuf{j+1}| ..j]$ and $T[j+1-|\LSuf{j+1}|+1 ..j+1]$ are repeating in $T[1..j+1]$. The former is repeating as it must be a suffix of $T[j-|\LSuf{j}|+1..j]$. 

    \label{case:mus_lemma_four}
\item If $\SSuf{j+1}$ exists, let $T[s..q]$ be its previous copy ($q< j+1$). Then $[s..q]$ is in $\MUS(T[1..j])$ but not in $\MUS(T[1..j+1])$ and therefore should be deleted. 
  
    \myProof{} $T[s..q]$ was unique in $T[1..j]$ as it occurs exactly twice in $T[1..j+1]$. $T[s+1..q]$ must be repeating in $T[1..j]$ as $T[s..q]=\SSuf{j+1}$ is the \textit{shortest} suffix of $T[1..j+1]$ occurring twice, therefore $T[s+1..q]$ must occur at least three times in $T[1..j+1]$ and therefore at least twice in $T[1..j]$. On the other hand, $T[s..q-1]$ is a repeating suffix of $T[1..j]$. Therefore, $T[s..q]$ was a MUS in $T[1..j]$ that is no longer a MUS in $T[1..j+1]$.

  Note that $T[s..q]$ may overlap $T[j+1-|\SSuf{j+1}|+1..j+1]$, in particular, it may happen that $q=j$. 
  In this case, $T[s..j]=\SSuf{j+1}=T[s+1..j+1]$ and therefore $T[s..j+1]=T[j+1]^k$, $k=|\SSuf{j+1}|+1$, is the single letter run which is unique in $T[1..j+1]$. 
  Observe that in this case, we have $\SSuf{j+1}=\LSuf{j+1}$.
  The situation is illustrated on the right of Figure~\ref{fig:muslem6}.

If $\SSuf{j+1}$ exists, one or two new MUSs can appear in this case, specified in the following and illustrated on the left in Figure~\ref{fig:muslem6}. 
\label{case:mus_delete}
\item A new potential MUS to be added to $\MUS(T[1..j+1])$ becomes $[s..q+1]$ provided that no (shorter) MUS in $T[1..j]$ ends at $q+1$. 
  
  \myProof{} Since $T[s..q]=\SSuf{j+1}$ has a single non-suffix occurrence in $T[1..j+1]$, $T[s..q+1]$ is unique in $T[1..j+1]$. Since $T[s..q]$ is repeating in $T[1..j+1]$, the shortest unique suffix $T[\ell .. q+1]$ of $T[s..q+1]$ is a MUS in $T[1..j+1]$. If $\ell=s$, this MUS is a new one, otherwise it already 
  existed in $T[1..j]$. 
  
\label{case:mus_extension}
  \item A new potential MUS to be added to $\MUS(T[1..j+1])$ becomes $[\ell - 1..q]$ where $T[\ell..q]=\LSuf{j+1}$, provided that $\ell\geq 1$ and no (shorter) MUS in $T[1..j]$ starts at $\ell-1$.

    \myProof{} By definition of $\LSuf{j+1}$, $T[\ell - 1..q]$ is unique in $T[1..j+1]$ and $T[\ell ..q]$ is repeating. Therefore, the shortest unique prefix $T[\ell - 1..p]$ of $T[\ell - 1..q]$ is a MUS in $T[1..j+1]$. If $p=q$, this MUS is a new one, otherwise it already 
    existed in $T[1..j]$. 
    \label{case:mus_ext_left}
\end{enumerate}

\begin{figure}
  \centering
  \includegraphics[width=0.64\linewidth,valign=t]{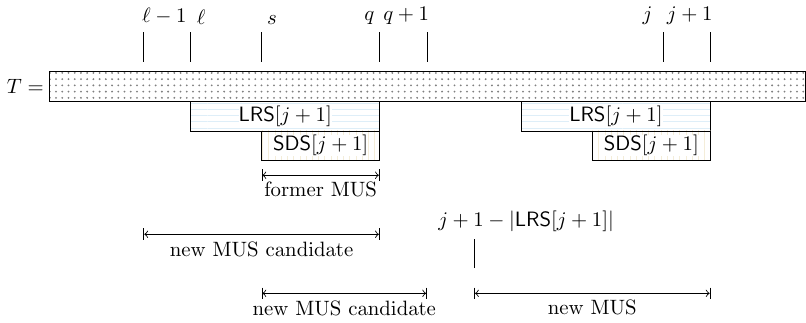}
  \includegraphics[width=0.34\linewidth,valign=t]{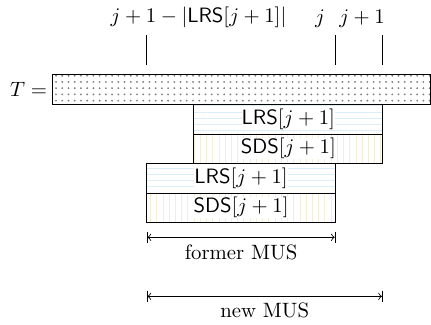}
	\caption{Illustration 
of Cases~\ref{case:mus_lemma_four} -- \ref{case:mus_ext_left}. Left: new MUS (Case~\ref{case:mus_lemma_four}), former MUS (Case~\ref{case:mus_delete}) and two potential new MUSs (Cases~\ref{case:mus_extension} and~\ref{case:mus_ext_left}),
		Right: particular case of Case~\ref{case:mus_delete} when $q=j$.
}
  \label{fig:muslem6}
\end{figure}

We now turn to the implementation of the above updates. First, we need to represent the current set of MUSs. Knowing that MUSs cannot nest, one can bijectively map the starting position of a MUS to its ending position, and vice versa. 
The authors of \cite{mieno20minimal} used this insight to create two arrays to maintain MUSs via the bijection and its reverse.
This allows them to retrieve a MUS by querying one of its end points, and adding/removing a MUS, all in constant time per operation. 

To implement the updates online, the authors of \cite{mieno20minimal} maintain $\STree(T[1..j])$ with Ukkonen's suffix tree construction algorithm~\cite{ukkonen95online}. While reading a new letter $T[j+1]$ from the text, they update $\STree(T[1..j])$ and compute the locus of $\LSuf{j+1}$ and $\SSuf{j+1}$, 
which they call \emph{active points}, and their string lengths  $|\LSuf{j+1}|$ and $|\SSuf{j+1}|$ respectively.  Computing $\LSuf{j+1}$ is a direct by-product of Ukkonen's algorithm, whereas computing $\SSuf{j+1}$ requires some additional work. Using this information, the set of MUSs is updated using the above description.

We now show how to efficiently implement the updates when we maintain the suffix tree online with 
Weiner's algorithm applied to the inverted text, as discussed in Section~\ref{sec:prelim}. 
Maintaining $\LSuf{i}$ has been studied in Section~\ref{sec:lrs}. To maintain $\SSuf{i}$, we need to solve the following problem. 

\problembox{\textsc{ShortestDoubleSuffix}\\
  \textbf{Input:} suffix tree $\STree(\pali{T[..j]})$ and letter~$T[j+1]$.\\
  \textbf{Output:} the shortest prefix of $\pali{T[..j+1]}$ that has exactly one occurrence in $\pali{T[1..j]}$, if it exists.
}

Assume a new letter $T[j+1]$ is appended to $T[1..j]$. 
As explained in Section~\ref{sec:lrs}, the locus of $\LSuf{j+1}$ corresponds to the insertion point of Weiner's algorithm (see Section~\ref{sec:weiner}). 
If the insertion point is an already existing node in $\STree(\pali{T[1..j]})$, then $\LSuf{j+1}$ occurs at least three times in $T[1..j+1]$ and $\SSuf{j+1}$ does not exist. 
Otherwise, the locus of $\LSuf{j+1}$ becomes a node in $\STree(\pali{T[1..j+1]})$ with exactly two leaves. 
Therefore, in this case, $\SSuf{j+1}$ exists and its locus is located on the parent edge to the locus of $\LSuf{j+1}$ and thus is obtained ``for free'' in constant time, simplifying the algorithm of \cite{mieno20minimal}.

Once $\LSuf{j+1}$ and $\SSuf{j+1}$ are found, we update the set of MUSs implementing the modifications described above:

\begin{enumerate}
	\item Add a new MUS $[j+1-|\LSuf{j+1}|..j+1]$.
	\item If $\SSuf{j+1}$ exists, find its non-suffix occurrence $[s..q]$: $s$ is the label of the leaf of the locus of $\SSuf{j+1}$ that already existed in $\STree(T[1..j])$, and $q=s+|\SSuf{j+1}|-1$. 
	Delete the MUS $[s..q]$. 
	\item If there is no MUS ending at position $q+1$, add the MUS $[s..q+1]$.
	\item If $q-|\LSuf{j+1}|\geq 1$ and there is no MUS starting at position $q-|\LSuf{j+1}|$, add the MUS $[q-|\LSuf{j+1}|..q]$. 
\end{enumerate}

Following \cite{mieno20minimal}, we maintain
two integer arrays $X_1$ and $X_2$ of size \(n\) mapping starting positions of the MUSs to their ending positions, and vice versa. In our online setting, arrays $X_1$ and $X_2$ are dynamic and grow by one entry to the right at each newly read letter. 
We implement $X_1$ and $X_2$
using the standard doubling technique (e.g. \cite[Section~3.2]{mehlhorn08algorithms}) which allocates a new array fragment of size \(n\) when the size of the current string reaches \(n\), thus doubling the occupied memory. 
While this implementation maintains the $O(n)$ memory, the waisted memory is also up to $O(n)$. To reduce the waisted memory, one can use 
the \emph{resizeable array} of Brodnik~\cite{brodnik99resizable} which supports the growable array operations in $O(1)$ worst-case time while having only up to $O(\sqrt{n})$ of waisted memory.
We summarize this section with the following final result.

\begin{theorem}
  We can maintain the set of minimal unique substrings $\MUS(T[1..n])$ of a string $T[1..n]$ online in $O(n)$ space and $O(\timeSU)$ worst-case time per letter.
\end{theorem}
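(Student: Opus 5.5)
We prove the theorem by assembling the pieces of this section on top of the online maintenance of $\STree(\pali{T})$ by the Breslauer--Italiano variant of Weiner's algorithm from \cref{sec:weiner,sec:improvements}. Processing $T[j+1]$ costs $O(\timeSU)$ worst-case time and keeps the tree fully functional apart from deferred W-links. By \cref{thm:LRS-online}, the same round yields the insertion point, whose string depth is $|\LSuf{j+1}|$.

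First we check correctness: the four update steps listed above transform $\MUS(T[1..j])$ into exactly $\MUS(T[1..j+1])$. Soundness is given by Cases~\ref{case:mus_lemma_four}--\ref{case:mus_ext_left}. For completeness, take any MUS of $T[1..j+1]$ that was not a MUS of $T[1..j]$. Either it ends at $j+1$, or it became unique only now, i.e.\ it extends the previous occurrence of a string that just gained its second occurrence. This is where we would lean on the analysis of Mieno et al.~\cite{mieno20minimal}, whose update rules we only restate. Likewise, the only MUS of $T[1..j]$ that can be destroyed is the previous copy of $\SSuf{j+1}$.

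The technical step is computing $\SSuf{j+1}$ and its previous occurrence in $O(1)$ extra time. We argue as in the text. If the insertion point is an existing node, then $\LSuf{j+1}$ occurs at least three times in $T[1..j+1]$, so no suffix of $T[1..j+1]$ occurs exactly twice and $\SSuf{j+1}$ does not exist. Otherwise the insertion point is a new node $\gamma$ with exactly two leaf children: the new leaf and one old leaf $\mu$. All prefixes of $\pali{T[1..j+1]}$ whose locus lies on the parent edge of $\gamma$, below the parent $\pi$, occur exactly twice. Longer ones occur once, and those at $\pi$ or above occur at least three times. Hence $|\SSuf{j+1}| = \fnDepth(\pi)+1$, and the previous occurrence is read off the leaf label of $\mu$, translated back to a position in $T$. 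This gives $s$, and then $q = s+|\SSuf{j+1}|-1$.

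Steps~2--4 then reduce to $O(1)$ lookups and updates in the arrays $X_1$, $X_2$. Each array grows by one cell per letter and is realized as a resizable array~\cite{brodnik99resizable} with $O(1)$ worst-case append, so there is no amortized doubling cost and the space is $O(n)$. The suffix tree also uses $O(n)$ words.

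The main obstacle is making sure the deferred W-link updates do not corrupt the information used in a round. We note that we use only tree topology, string depths and leaf labels, never W-links, and the observation in \cref{sec:weiner} states that these are always correct. We also need the conversion between leaf labels of $\pali{T}$ and positions of $T$ to be a constant-time arithmetic relation that stays stable as $T$ grows; we would check this by labeling each leaf with the position in $T$ where its suffix of $\pali{T}$ starts. Summing up, each letter costs $O(\timeSU)+O(1)$ worst-case time, which proves the theorem.
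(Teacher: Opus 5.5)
Your decomposition is the same as the paper's: the insertion point of Weiner's algorithm on $\pali{T}$ yields $\LSuf{j+1}$, $\SSuf{j+1}$ is read off the parent edge of the insertion node, and $X_1,X_2$ live in growable arrays. However, the step you single out as the technical one is false as stated: ``otherwise the insertion point is a new node $\gamma$ with exactly two leaf children''. The paper's sentence ``Otherwise, the locus of $\LSuf{j+1}$ becomes a node \dots{} with exactly two leaves'' has the same flaw, so you inherited it, but it is a genuine error. An insertion point strictly inside an edge only tells you that all old occurrences of $\pali{\LSuf{j+1}}$ are followed by the same letter. It does not tell you there is only one such occurrence, and the lower endpoint of the split edge may be an internal node.

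Take $T[1..4]=\mathtt{baba}$ and $T[5]=\mathtt{a}$. In $\STree(\mathtt{abab})$ both occurrences of $\mathtt{a}$ are followed by $\mathtt{b}$. So the insertion point $\mathtt{a}$ of the new suffix $\mathtt{aabab}$ lies inside the edge from the root to the internal node $\mathtt{ab}$. In Breslauer--Italiano terms, the W-link by $\mathtt{a}$ from the root is soft and its target $\beta$ is internal. The new node $\gamma$ then has three leaves below it: $\LSuf{5}=\mathtt{a}$ occurs three times in $\mathtt{babaa}$ and no suffix occurs exactly twice, so $\SSuf{5}$ does not exist. Your rule still declares $|\SSuf{5}|=1$, and there is no old leaf $\mu$ to read $q$ from. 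With either old occurrence ($q=2$ or $q=4$), Step~4 inserts $[q-1..q]$, which spells $\mathtt{ba}$ and occurs twice. The correct answer is $\MUS(\mathtt{babaa})=\{[2..3],[4..5]\}$.

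The repair is still an $O(1)$ test. $\SSuf{j+1}$ exists if and only if $\LSuf{j+1}$ is nonempty and occurs exactly once in $T[1..j]$. Equivalently, the insertion point lies strictly inside an edge whose lower endpoint is a leaf. This assumes, as Weiner's algorithm implicitly does, an end marker so that every suffix of $\pali{T[1..j]}$ is a leaf; otherwise a suffix can sit implicitly on a leaf edge and the test fails. Only in this case is your computation valid, with $|\SSuf{j+1}|$ equal to the string depth of $\pi$ plus one and $\mu$ that leaf. In every other case Steps~2--4 must be skipped.

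Two smaller points:
\begin{itemize}
\item Case~(1) adds $[j+1-|\LSuf{j+1}|..j+1]$ only when $|\LSuf{j+1}|\le|\LSuf{j}|$, whereas the listed Step~1 is unconditional. So soundness of the listed steps does not follow from Cases~(1)--(4) without this guard. For $\mathtt{ab}\to\mathtt{aba}$, the unguarded step inserts $[2..3]$ although $\mathtt{b}$ is unique. Keep the previous $\LRS$ value and guard Step~1. The remaining case $|\LSuf{j+1}|=|\LSuf{j}|+1$ yields a MUS only for a unary run, and that MUS is added by Step~3 with $q=j$.
\item If leaves are labeled by the position in $T$ where their suffix of $\pali{T}$ starts, the label of $\mu$ is $q$, not $s$. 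Set $s=q-|\SSuf{j+1}|+1$.
\end{itemize}
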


\section{Reversed Lempel--Ziv}\label{sec:reversedlz}

The reversed {LZ} factorization was introduced by Kolpakov and Kucherov~\cite{kolpakov09gappedpalindromes}
as a helpful tool for detecting \emph{gapped palindromes}, 
i.e.,  substrings of a given text~$T$ of the form $\pali{S}GS$ for two strings $S$ and $G$.  
The reversed {LZ} factorization of $T$ is defined recursively as follows:
a factorization $T = F_1 \cdots F_z$ 
is the \emph{reversed {LZ} factorization} of $T$ if each factor $F_x$, for $x \in [1..z]$,
is either the first occurrence of a letter or the longest prefix of $F_x \cdots F_z$ that has an inverted copy $\pali{F_x}$ as a substring of $F_1 \cdots F_{x-1}$.
Like LZ77, the factorization is a macro scheme~\cite{storer82lzss}.
Among all variants of such a left-to-right parsing using the reversed substring as a reference to the formerly parsed part of the text, 
the greedy parsing achieves optimality with respect to the number of factors~\cite[Theorem~3.1]{crochemore14note}
since the reversed occurrence of~$F_x$ can be the prefix of any suffix in $F_1 \cdots F_{x-1}$, 
and thus fulfills the suffix-closed property~\cite[Definition~2.2]{crochemore14note}.

Kolpakov and Kucherov~\cite{kolpakov09gappedpalindromes} also gave an algorithm computing the reversed {LZ} factorization in $O(n \lg \sigma)$ time using $O(n \lg n)$ bits of space,
by applying Weiner's suffix tree construction algorithm~\cite{weiner73linear} on the reversed text~$\pali{T}$.
Later, Sugimoto et al. \cite{sugimoto13reversedLZ} presented an online factorization algorithm running in $O(n \lg^2 \sigma)$ time using $O(n \lg \sigma)$ bits of space.
Recently, Köppl~\cite{koppl21reversed} improved the time complexity to $O(n)$ while keeping the space usage of $O(n \lg \sigma)$ bits. 
While the last approach works offline, the former results are online algorithms with amortized time bounds per letter.

\subparagraph*{Reversed LZ factorization without self-references}
In the remainder of this section, we follow the ideas of the algorithm of~\cite{kolpakov09gappedpalindromes} to show how to compute the reversed {LZ} factorizations online in $O(\timeSU)$ worst-case time per processed letter.
As in the previous sections, we maintain the suffix tree of the reversed text~$\pali{T}$ online using Weiner's algorithm.
The new idea is that we require the capability for pattern matching coupled with tree construction.
In parallel to extending the suffix tree, we are matching a pattern which is the LZ factor we are currently parsing. 
The task is to find the locus of the longest prefix of the pattern that can be read by traversing the suffix tree top-down.
For that, we maintain the currently computed locus of a matching prefix of the pattern.
After reading a new letter of the text, we match this letter by making one traversal step of the suffix tree. 
If the letter cannot be matched, the parsing of the current factor terminates and the new one starts. 
Then the suffix tree is extended with the new letter.

The important point, however, is that we are not allowed to traverse nodes that have been created after we started the pattern matching process.
To ensure that, we borrow the idea of timestamping suffix tree nodes~\cite[Section~3]{amir02online}.
Namely, we store a timestamp in each node of $\STree(\pali{T})$ indicating when it has been created,
which is identical to the largest suffix of $\pali{T}$ at the leaves of its subtree.
By doing so, we can abort the pattern matching process if we reach a node whose timestamp is larger than the ending position of the previous factor.
The timestamps can be maintained while constructing the suffix tree with Weiner's algorithm in constant additional time since the suffix length of a newly created leaf is known at the time of its creation, and affects only the timestamp of its parent node.

Recall that 
if we follow the Breslauer-Italiano approach to maintaining weak W-links, we deamortize the update of multiple weak W-links at an individual round by updating a constant number (at least two) of weak W-links at each round in the order of increasing depth. Importantly, this process does not interfere with the pattern matching process we are doing in parallel, as the latter only requires the timely creation of new nodes which in turn is governed by hard W-links alone updated in real time.

Finally, the pattern matching process incurs an additional cost which depends on how the suffix tree maintenance algorithm is implemented. In all implementations mentioned in Section~\ref{sec:prelim} (see Table~\ref{tab:timeSU}), the suffix tree traversal step is dominated by the update step. Note that the tree traversal step essentially reduces to accessing the corresponding child of a branching node, and a tree update step includes creation of a new leaf which, in most implementations, subsumes the parent-to-child access. 

We conclude with the following result.

\begin{theorem}\label{thm:reversedlz}
  We can compute the reversed {LZ} factorization of a string $T[1..n]$ online in $O(n)$ space and $O(\timeSU)$ worst-case time per letter.
\end{theorem}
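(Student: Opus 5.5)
We obtain Theorem~\ref{thm:reversedlz} by running two processes in lockstep on each newly read letter $T[i]$: the Weiner-style update of $\STree(\pali{T[1..i-1]})$ to $\STree(\pali{T[1..i]})$, and one step of a top-down pattern match in this tree. Suppose the current factor $F_x$ starts at $s_x$, so the previous factor ends at $e=s_x-1$. The factor $F_x$ is the longest prefix of $T[s_x..]$ whose inverse occurs in $T[1..e]$. Equivalently, $F_x$ read forwards is a prefix of some suffix of $\pali{T[1..e]}$, that is, a top-down path in $\STree(\pali{T[1..e]})$. We therefore keep a locus $\mu$, a node plus an offset on an edge, spelling $T[s_x..i-1]$. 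On reading $T[i]$ we try to extend $\mu$ by one letter. If the extension is impossible in the restricted tree defined below, we output $F_x=T[s_x..i-1]$, or the single letter $T[i]$ if $s_x=i$ and no match exists, and restart at the root with $s_{x+1}=i$. Only after this step do we perform the \textsc{SuffixUpdate} with $T[i]$.

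\textbf{Correctness.} The delicate point is that at round $i$ the tree indexes $\pali{T[1..i-1]}$, which is longer than $\pali{T[1..e]}$, so we must not match inside the part added after round $e$. We use the timestamps described before the theorem: each node stores the largest suffix index, i.e.\ the time of creation, in its subtree. The claim to verify is the following. A locus on an edge into node $v$ spells a string occurring as a prefix of a suffix of $\pali{T[1..e]}$ if and only if the subtree of $v$ contains a leaf created by time $e$, that is, the minimal creation time in the subtree is at most $e$.

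To make the timestamp test exact, I would store the earliest leaf creation time below each node. This value never changes once set: a new leaf cannot lower it, and a newly split node inherits it from its child. Hence it is maintained in $O(1)$ time per created node. It also handles split edges correctly, because a node created later by splitting an old edge inherits an old timestamp, so the match can still pass through it. We thus traverse a child edge only if that child's stored value is at most $e$, and we compare letters along edges in the usual way using stored text positions.

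\textbf{Worst-case time.} Each round performs one matching step: access the child for letter $T[i]$ at a branching node, or compare one letter inside an edge. Child access is supported by the dictionary each implementation in Table~\ref{tab:timeSU} already uses to attach leaves, so this costs at most $O(\timeSU)$.

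The main obstacle is interference with the deamortized W-link updates of Breslauer--Italiano. Node creation and hard W-links are handled in real time, and the matching uses only parent-child edges, never W-links. The deferred soft-link updates therefore do not affect the match, as noted before the theorem. Space is $O(n)$ for the tree plus the factor list. Finally, the case where $F_x$ is the first occurrence of a letter is detected when the root has no admissible child for $T[i]$.
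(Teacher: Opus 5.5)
Your proposal is correct and follows the paper's proof. The shared ingredients are Weiner's algorithm on $\pali{T}$, one top-down matching step per letter performed before the \textsc{SuffixUpdate}, timestamps that block descent into subtrees whose leaves were all created after the previous factor ended, and the observation that deferred soft W-link updates never affect the parent--child structure the matching uses. Your \emph{minimum} leaf creation time, inherited by nodes created by edge splits, is the precise form of the paper's somewhat ambiguously worded timestamp; the one step you assert rather than verify is that child access fits within $O(\timeSU)$ for every implementation in Table~\ref{tab:timeSU}, which the paper makes rigorous in Appendix~\ref{app:timeSUTraverse} with a parallel trie using a suitable child dictionary per implementation (Table~\ref{tab:timeSUTraverse}).
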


\subparagraph*{Reversed LZ factorization with self-references}

Sugimoto et al. \cite{sugimoto13reversedLZ} presented a variant of the reversed {LZ} factorization
called the \textit{reversed {LZ} factorization with self-references}. In that version, a factor $F_x$ and its inverted copy $\pali{F_x}$ are allowed to overlap~\cite[Definition~4]{sugimoto13reversedLZ}.
In detail, 
a factorization $F_1\cdots F_z = T$ is the \emph{overlapping reversed {LZ} factorization} of $T$ 
if each next factor $F_x$ 
is the first occurrence of a letter 
or the longest prefix of $F_x \cdots F_z$
with the property that each non-empty prefix $F$ (including $F_x$ itself) of $F_x$ has an inverted copy $\pali{F}$ as a substring of $F_1 \cdots F_{x-1}F$.
Observe that a factor $F_x$ overlaps its inverted copy $\pali{F_x}$ if and only if $F_1 \cdots F_{x}$ has a palindromic suffix of length\footnote{Our definition slightly differs from that of \cite{sugimoto13reversedLZ} in that the inverted copy does not have to end \textit{before} the end of $F_x$, i.e., $F_x$ itself can be a palindrome.}
between $|F_x|$ and $2|F_x|-1$.

While the reversed {LZ} factorization with self-references cannot be directly used to encode the string, it can still be computed online using our technique together with Manacher's algorithm~\cite{manacher75new}. 
Manacher's algorithm is a linear-time algorithm to compute all maximal palindromes in a string, 
it reads the input string online and maintains as an invariant the longest palindromic suffix of the current string. By a standard deamortization argument, Manacher's algorithm can be made real time \cite{galil76real}, spending constant worst-case time on each new letter.

The idea is then to run Manacher's algorithm in parallel with  the algorithm of Theorem~\ref{thm:reversedlz}. 
Maintaining the longest palindromic suffix 
covers the case of self-overlapping factor whereas the algorithm of Theorem~\ref{thm:reversedlz} covers the case of non-overlapping reversed factor. We then keep at each step the longest factor between those obtained by the two algorithms. See Figure~\ref{fig:OVrevlz} for an illustration.

\begin{figure}[t]
  \begin{minipage}{0.4\linewidth}
  	\scalebox{.8}{
	\includegraphics{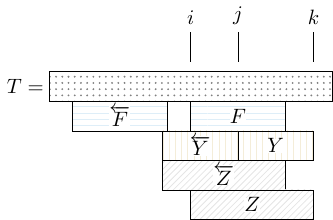}
}
  \end{minipage}
  \begin{minipage}{0.55\linewidth}
	\caption{Illustration of the reversed LZ factorization with self-references.
		We maintain both the longest non-overlapping reversed LZ factor $F$ starting at position $i$ and the longest palindromic suffix $\pali{Y} Y$ starting before $i$. The one which extends further right defines the factor. Here the suffix palindrome $\pali{Y} Y$ defines the new factor $Z$. }
	\label{fig:OVrevlz}
  \end{minipage}
\end{figure}

\begin{theorem}
	We can compute the reversed {LZ} factorization with self-references of a string $T[1..n]$ online in $O(n)$ space and $O(\timeSU)$ worst-case time per letter.
\end{theorem}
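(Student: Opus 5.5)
We combine the algorithm of Theorem~\ref{thm:reversedlz} with a real-time version of Manacher's algorithm, run in parallel on the same input stream. Let the current factor $F_x$ start at position $i$, and suppose the text has been read up to position $k\ge i$.

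\emph{Characterising when $T[i..k]$ is a valid extension.} Every nonempty prefix $F$ of $T[i..k]$ must have an inverted copy $\pali{F}$ inside $T[1..i+|F|-1]$. We split this into two cases.
\begin{enumerate}
\item The copy lies entirely in $T[1..i-1]$. This is exactly the non-self-referencing condition. It is maintained by the top-down matching in the timestamped suffix tree of $\pali{T}$, which is updated by Weiner's algorithm. Only nodes whose timestamp is at most $i-1$ may be traversed.
\item The copy overlaps $F$. By the footnote's observation, this happens iff $T[1..i+|F|-1]$ has a palindromic suffix of length between $|F|$ and $2|F|-1$. Equivalently, there is a palindrome ending at $i+|F|-1$ that starts before $i$, or starts exactly at $i$ when $F$ itself is a palindrome.
\end{enumerate}

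\emph{Prefix-closedness.} This is the step I expect to need the most care. The set of valid lengths must be an initial segment, so that greedy extension letter by letter is correct. I would argue it as follows: if $F$ has an inverted copy ending at or before $i+|F|-1$, then dropping the last letter of $F$ gives $F'$, whose inverted copy is a suffix of $\pali{F}$'s copy. That suffix ends at most one position later than needed. I would check this by cases:
\begin{itemize}
\item In the non-overlapping case the copy lies in $T[1..i-1]$ and the claim is immediate.
\item In the palindromic case, a palindrome $\pali{Y}Y$ ending at $k$ with $Y$ covering $T[i..k]$ induces, by trimming both ends, a palindrome ending at $k-1$ that still starts at or before $i$.
\end{itemize}
Hence $F_x$ extends exactly as long as at least one of the two mechanisms still succeeds.

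\emph{Palindromic side, online.} I would run Manacher's algorithm made real-time by the standard deamortization~\cite{galil76real}. It maintains the longest palindromic suffix of $T[1..k]$ in $O(1)$ worst-case time per letter. The palindromic case holds iff this longest palindromic suffix starts at a position $\le i$. This is justified because any palindromic suffix starting at or before $i$ certifies the condition, and the longest one starts earliest.

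\emph{Combined round and cost.} At each round we:
\begin{itemize}
\item extend the suffix-tree match by one letter, in time dominated by the update as argued before Theorem~\ref{thm:reversedlz};
\item check the Manacher condition in $O(1)$;
\item continue the factor if either test succeeds, and otherwise close $F_x$ and start a new factor at the current position, resetting the match to the root.
\end{itemize}
One subtlety: the suffix-tree match may fail while the palindrome test still succeeds, and later the palindrome test fails. Then the factor ends, and we never need to resume the tree match within it. Since validity is prefix-closed, the first failure of both tests correctly marks the end of the factor. First occurrences of letters are handled as length-one factors when both tests fail immediately. Each round costs $O(\timeSU)+O(1)$ worst-case time, and the space is $O(n)$, which proves the theorem.
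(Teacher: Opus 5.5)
Your algorithm is the paper's: run real-time Manacher alongside the timestamped suffix-tree matching of Theorem~\ref{thm:reversedlz}, and continue the factor while either test succeeds. Your test ``the longest palindromic suffix of $T[1..k]$ starts at a position $\le i$'' is a correct and more explicit version of the paper's ``keep the longest factor between the two''.

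However, your prefix-closedness step asserts something false. The bare condition ``$\pali{F}$ occurs in $T[1..i+|F|-1]$'' is not closed under prefixes. Take $T=\texttt{caba}$. The first factor is $\texttt{c}$. At $i=2$, the string $F=\texttt{aba}$ satisfies the condition, because it is a palindrome (the case the paper's footnote admits). Its prefix $\texttt{ab}$ does not, since $\texttt{ba}$ does not occur in $\texttt{cab}$.

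Your trimming argument fails exactly when the palindromic suffix starts at $i$, because trimming yields one starting at $i+1$. Moreover, a palindrome $\pali{Y}Y$ whose half $Y$ covers $T[i..k]$ starts at or before $i-|F|$. That is really the non-overlapping case, so your case analysis never treats the overlapping one.

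The step is, however, unnecessary. The definition requires \emph{every} nonempty prefix of $F_x$ to have its inverted copy. So the factor is by definition the longest stretch of lengths on which the combined test never fails, and stopping at the first failure is correct without any closure property.

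The only monotonicity you actually use is for the non-overlapping condition. If $\pali{T[i..k]}$ does not occur in $T[1..i-1]$, neither does $\pali{T[i..k']}$ for any $k'>k$, since the former is a suffix of the latter. This is what justifies never resuming the tree match. So replace the prefix-closedness paragraph with an appeal to the definition plus this remark.

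A minor point: both tests can never fail at length one, since $T[i]$ is itself a palindromic suffix starting at $i$. Your remark about first occurrences is therefore vacuous, though harmless.
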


\section{Conclusion}
We studied applications of the Breslauer--Italiano deamoritization of Weiner's algorithm to obtain near-real-time online algorithms for several string processing problems.
The crucial observation was that the deamortization of the suffix tree construction process does not interfere with the functionality of the suffix tree needed to solve these problems.
We gave online algorithms computing the longest repeating suffixes, the Lempel--Ziv factorization, the set of minimal unique substrings, and the reversed Lempel--Ziv factorization (with and without self-references) in $O(n)$ space and $O(\timeSU)$ worst-case time per letter, where $\timeSU$ depends on the implementation of the suffix tree construction algorithm.

\bibliographystyle{plainurl}

\appendix
\section{Details of Weiner's suffix tree construction algorithm}\label{app:weiner}

Here we prove the following two statements that are needed to obtain the claims and the time complexity of Weiner's algorithm of  \cref{sec:weiner}. 
\begin{lemma}\label{lem:weiner_depth}
  The height of $\STree(T[i..n])$ is the height of $\STree(T[i+1..n])$ increased by at most one.
\end{lemma}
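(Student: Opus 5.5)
The plan is to bound the node depth of every node of $\STree(T[i..n])$ against $\STree(T[i+1..n])$, where $\fnDepth$ counts the nodes on the root-to-node path. One round of Weiner's algorithm (\cref{sec:weiner}) changes the tree in only two ways. It may split one edge, the parent edge of $\beta$, by creating $\gamma$. It then attaches the new leaf $\lambda'$ for $T[i..]$ below the insertion point. So I would split the nodes of the new tree into two groups: the old nodes, including $\gamma$, and the new leaf $\lambda'$. Each group is then handled separately.

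\emph{Old nodes.} Splitting one edge inserts a single node on the root paths of exactly the nodes in the subtree of $\beta$. Hence every old node gains at most one in depth. The node $\gamma$ itself has depth equal to the old depth of $\beta$. Attaching a leaf changes no other depths. So all these nodes have depth at most the old height plus one.

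\emph{The new leaf.} This is the main step. I would show $\fnDepth(\lambda')\le\fnDepth(\lambda_{i+1})+1$, where $\lambda_{i+1}$ is the leaf of $T[i+1..]$ in the old tree. The argument is an injection from the non-root ancestors of $\lambda'$ into the ancestors of $\lambda_{i+1}$ in the old tree.
\begin{itemize}
\item Write a proper non-root ancestor of $\lambda'$ as a node with label $cY$, where $c=T[i]$.
\item $cY$ is a proper prefix of $T[i..]$, so $Y$ is a prefix of $T[i+1..]$.
\item Because $cY$ is a branching node, it is followed by two distinct letters $a\neq b$ in $T[i..]$. The corresponding occurrences of $Y$ followed by $a$ and by $b$ start one position later, hence lie inside $T[i+1..]$.
\item So $Y$ is branching in $T[i+1..]$, i.e.\ it is an explicit node and an ancestor of $\lambda_{i+1}$.
\item The map $cY\mapsto Y$ is injective.
\end{itemize}
Together with the leaf $\lambda'$ itself, which we may pair with $\lambda_{i+1}$, this gives the claimed depth bound. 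Since $\fnDepth(\lambda_{i+1})$ is at most the old height, the new height is at most the old height plus one.

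The main obstacle is the boundary case where a suffix is a prefix of another suffix, so that leaves may not be explicit. I would assume a unique terminator at the end of $T$, as is standard, or else treat implicit suffix loci with the same injection restricted to branching nodes. The same injection also yields the bound $\fnDepth(\zeta)\le\fnDepth(\delta)+1$ used in \cref{sec:weiner}.
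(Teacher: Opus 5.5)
Your proof is correct. Its first half is the paper's entire argument: a round adds one leaf and at most one internal node, and inserting $\gamma$ on the parent edge of $\beta$ pushes exactly the subtree of $\beta$ down by one.

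Where you depart from the paper is the new leaf $\lambda'$. The paper's proof does not treat it separately. You bound it through the suffix-link injection $cY\mapsto Y$ from the non-root ancestors of $\lambda'$ into the ancestors of $\lambda_{i+1}$. This is the same mechanism the paper uses in the next appendix lemma, the one relating $\zeta$ and $\delta$.

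For the height statement alone, this is heavier than necessary. $\lambda'$ hangs from the insertion point, which is one of two things:
\begin{itemize}
\item an old internal node, whose depth is at most the old height minus one because it already has a child;
\item the new node $\gamma$, whose depth equals the old depth of $\beta$.
\end{itemize}
So $\lambda'$ has depth at most the old height plus one by a purely local argument.

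What your route buys is the sharper leaf-to-leaf bound $\fnDepth(\lambda_i)\le\fnDepth(\lambda_{i+1})+1$, which compares consecutive leaves rather than heights. It also gives an explicit treatment of the new-leaf case, which the paper's two-line proof leaves implicit. Your terminator assumption is the same one the paper makes tacitly.
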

\begin{proof}
 At each round, Weiner's algorithm inserts a new suffix to the suffix tree.
 This insertion adds one leaf and at most one internal node~$\alpha$.
 Given we created this node $\alpha$, all nodes in the subtree of $\alpha$ have their depth increased by one,
 and the height of the suffix tree increases by at most one.
\end{proof}

\begin{lemma}
	Assume an internal node $\zeta$ in $\STree(T[i..])$ is the locus of substring $c X$, 
i.e., $\strlabel(\zeta) = c X$. 
  Then
  $X$ is the locus of a node $\delta$ in $\STree(T[i+1..])$, 
  in particular, $\Weiner_c(\delta) = \zeta$ in $\STree(T[i..])$ and the depth of $\zeta$ is at most the depth of $\delta$ plus one.
\end{lemma}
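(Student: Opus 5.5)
The plan is to use the standard characterization of internal nodes of a suffix tree: a string $Y$ labels an internal (branching) node of $\STree(S)$ if and only if $Y$ is right-branching in $S$. That is, there are two distinct letters $a \neq b$ with both $Ya$ and $Yb$ substrings of $S$. If $\$$-terminated suffixes are used, a suffix occurrence of $Y$ also counts as a distinct extension. Since $\zeta$ is internal in $\STree(T[i..])$ with $\strlabel(\zeta) = cX$, we get occurrences of $cXa$ and $cXb$ in $T[i..]$ with $a \neq b$. We drop the leading letter $c$ from each occurrence. This gives occurrences of $Xa$ and $Xb$, each starting at a position at least one larger, hence inside $T[i+1..]$. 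The one point to check is that the occurrence of $cX$ starting at position $i$ still leaves its $X$-part starting at $i+1$, which lies in $T[i+1..]$. Therefore $X$ is right-branching in $T[i+1..]$, and its locus is an internal node $\delta$ of $\STree(T[i+1..])$.

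Next we argue that $\delta$ is still a node of $\STree(T[i..])$ with the same label. Right-branching is monotone under extending the string, since $T[i+1..]$ is a substring of $T[i..]$, so $\delta$ persists. By definition, $\Weiner_c(\delta)$ is the locus of $c\cdot\strlabel(\delta) = cX$ in $\STree(T[i..])$. That locus is exactly the node $\zeta$, so the W-link is hard and equals $\zeta$.

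For the depth claim, the plan is to build an injection from the proper ancestors of $\zeta$ into the proper ancestors of $\delta$. Every proper ancestor $\zeta'$ of $\zeta$ other than the root has a label of the form $cX'$ with $X'$ a proper prefix of $X$. Applying the first paragraph to $\zeta'$ shows that $X'$ labels a node of $\STree(T[i+1..])$, which is then a proper ancestor of $\delta$ because $X'$ is a proper prefix of $X$. The map $cX' \mapsto X'$ is injective. The root of $\STree(T[i..])$ is the one ancestor not covered, which yields the $+1$ in the bound. Depth here means node depth in $\STree(T[i..])$, and the injection shows $\fnDepth(\zeta) \le \fnDepth(\delta)+1$.

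The main obstacle is the boundary case in the branching argument. When one of the right extensions of $cX$ is the end of the string, or when the occurrence at position $i$ is the one that makes $cX$ branch, we must confirm that the truncated occurrences stay within $T[i+1..]$. We must also confirm that the end-of-string extension is still counted as a branch for $X$ in $T[i+1..]$. I would handle this by treating every suffix as $\$$-terminated, so the end of a suffix is just another right extension. With that convention, dropping the first letter of an occurrence preserves both the right context and the containment in $T[i+1..]$. A secondary subtlety is that ancestors in $\STree(T[i+1..])$ could also be created or split when going to $T[i..]$. Measuring $\delta$'s depth in $\STree(T[i..])$ sidesteps this, and since only ancestors are counted, the injection argument still applies.
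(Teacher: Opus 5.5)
Your proof is correct and is essentially the paper's argument. The paper splits into two cases: if $\zeta$ already existed in $\STree(T[i+1..])$, it takes $\delta$ as the suffix-link target; if $\zeta$ is new, it uses exactly your right-branching argument. You handle both cases uniformly, and the depth bound is the same suffix-link injection $cX'\mapsto X'$.

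One small point: you do not need to measure $\delta$'s depth in $\STree(T[i..])$. Your injection already lands on proper ancestors of $\delta$ in $\STree(T[i+1..])$, so it gives the stronger bound with $\fnDepth(\delta)$ taken in the old tree, where a newly inserted internal node above $\delta$ is not counted. That stronger form is the one the paper's analysis of Step~1 uses.
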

\begin{proof}
  If $\zeta$ is already an internal node in $\STree(T[i..])$, then the suffix link of $\zeta$ points to a node $\delta$ with string label $X$ in $\STree(T[i+1..])$.

  Now assume that $\zeta$ is not an internal node in $\STree(T[i+1..])$ which can only happen when $c = T[i]$.
Since $\zeta$ is an internal node  in $\STree(T[i..])$, there are distinct letters $d_1$ and $d_2$ such that $X d_1$ and $X d_2$ are substrings of $T[i+1..]$.
  Hence, $\delta$ must be an internal node in $\STree(T[i..])$.

  To show that the depth of $\zeta$ cannot be more than the depth of $\delta$ plus one, 
observe that if $\lambda$ is an ancestor of $\zeta$ with string label $c Y$, then there exists a node in $\STree(T[i+1..])$ with string label $Y$ that is an ancestor of $\delta$.
  Therefore, every node along the path to $\zeta$ maps by the suffix link to a distinct node on the path to $\delta$, except the possible node labeled $c$ which maps to the root. 
Consequently, the number of ancestors of $\zeta$ is bounded by the number of ancestors of $\delta$ plus one.
\end{proof}

\section{Time complexities for downwards traversal}\label{app:timeSUTraverse}

We here explicitly address the pattern matching problem, on which the solutions in \cref{sec:reversedlz} for the reversed Lempel--Ziv factorization rely.
In detail, we rely on answering two types of queries while maintaining the suffix tree online.

\problembox{\textsc{PatternMatch}\\
    \textbf{Input:} suffix tree $\STree(T[1..j-1])$, locus of $P \in \Sigma^*$ in it, letter $T[j]$, and a letter $c$.\\
    \textbf{Output:} $\STree(T[1..j])$ and the locus of $Pc$ in it.
}
This problem asks, for a given node~$v$ and letter~$c$, to retrieve $v$'s child edge whose first letter matches $c$.
Let $\timeTraverse$ denote the time complexity for solving \textsc{PatternMatch}.
The classic solution to this problem is to maintain explicit child pointers for each node in the suffix tree by a binary search tree, which allows answering \textsc{PatternMatch} in $O(\lg \sigma)$ time.
However, this time bound is subsumed by $\timeSU$ only for Weiner's algorithm presented in the first row.
Unfortunately, not all solutions presented in the table make explicit the time in which they can solve \textsc{PatternMatch}.
Instead of verifying each solution separately, we propose a general technique to obtain the desired time bounds for all known solutions.
The idea is to keep each suffix tree representation as it is.
However, we augment each node with a pointer to a copy of itself.
Its copy belongs to a trie which is maintained in parallel to the suffix tree, and takes the same topology as the suffix tree.
Vice versa, we store a pointer from the copy back to the original node such that we can always jump between suffix tree node and trie node in constant time.
Using this mapping, we can answer \textsc{PatternMatch} queries on the trie instead of the suffix tree.
Fortunately, it is possible to find, for each mentioned suffix tree representation, a trie implementation whose time bound for queries and updates $\timeTraverse$ is subsumed by the respective $\timeSU$.
We have summarized the results in Table~\ref{tab:timeSUTraverse}.

Consequently, we can solve \textsc{PatternMatch} in $O(\timeSU + \timeTraverse) = O(\timeSU)$ time per letter, and 
thus obtain the time bounds claimed in \cref{sec:reversedlz}.

\begin{table}[t]
 \centering
 \caption{Time complexities $\timeSU$ of \cref{tab:timeSU} and $\timeTraverse$ for solving \textsc{SuffixUpdate} and a downwards traversal. See \cref{app:timeSUTraverse} for details.}
 \label{tab:timeSUTraverse}
 \newcommand*{\ditto}{--\raisebox{-0.5ex}{"}--}
 \begin{tabular}{ll|ll|l}
   \toprule
   \multicolumn{2}{c}{suffix tree} & \multicolumn{2}{c}{trie} 
   \\\cmidrule(lr){1-2} \cmidrule(lr){3-4}
   $\timeSU$ & Ref.\ & $\timeTraverse$ & Ref.\ & caveat\\
   \midrule
   $O(\lg n)$ & \cite{amir05towards} & $O(\log \sigma)$ & binary search \\
   $O(\sigma \log\log n)$ & \cite{breslauer13near} & \multicolumn{1}{c}{\ditto{}} & \multicolumn{1}{c}{\ditto{}} \\
   $O(\log\log n + \log\log \sigma)$ & \cite{kopelowitz12online} & $O(\log \log \sigma)$ expected & y-fast trie~\cite{willard83yfast} & randomized\\
   $O(\log\log n + \frac{\log^2\log \sigma}{\log\log\log \sigma})$ & \cite{fischer15wexponential} & $O(\frac{\log^2\log \sigma}{\log\log\log \sigma})$ & Wexp trie~\cite{fischer15wexponential} \\
   $O(\log\log n)$ & \cite{kucherov17fullfledged} & $O(1)$ & $q$-heaps \cite{10.1145/1541885.1541889} & $\sigma = O(\log^{1/4} n)$\\
   \bottomrule
 \end{tabular}
\end{table}

\end{document}